\newtheorem{theorem}{Theorem}[section]
\newtheorem{lemma}[theorem]{Lemma}
\newtheorem{property}[theorem]{Property}
\theoremstyle{definition}
\newtheorem{definition}[theorem]{Definition}
\newtheorem{notation}[theorem]{Notation}
\newtheorem{remark}[theorem]{Remark}
\newtheorem{example}[theorem]{Example}
\begin{document}



\title{Suppressing chaos in discontinuous systems \\of fractional order by active control}


\author{
Marius-F Danca \\
\small Dept. of Mathematics and Computer Science, \\ 
\small Avram Iancu University, 400380 Cluj-Napoca, Romania \\
\small Romanian Institute of Science and Technology, 400487 Cluj-Napoca, Romania \\
\small \texttt{danca@rist.ro}
\and
Roberto Garrappa \\
\small Dept. of Mathematics, University of Bari, 70125 Bari, Italy\\ 
\small \texttt{roberto.garrappa@uniba.it}
}

\date{May 2, 2014}

\maketitle

\begin{abstract}
In this paper, a chaos control algorithm for a class of piece-wise continuous chaotic systems of fractional order, in the Caputo sense, is proposed. With the aid of Filippov's convex regularization and via differential inclusions, the underlying discontinuous initial value problem is first recast in terms of a set-valued problem and hence it is continuously approximated by using Cellina's Theorem for differential inclusions. For chaos control, an active control technique is implemented so that the unstable equilibria become stable. As example, Shimizu--Morioka's system is considered. Numerical simulations are obtained by means of the Adams-Bashforth-Moulton method for differential equations of fractional-order.

\end{abstract}



\section{Introduction}
\label{sec1}

Several real--life systems show non--smooth physical properties (for instance dry friction, forced vibration brake processes with locking phase, stick, and slip phenomena) which can be suitably modeled by introducing some kind of discontinuity. Moreover, anomalous processes (for instance in non--standard materials) exhibit \emph{memory} and \emph{ereditary} properties and derivatives of fractional order are an effective tool to keep into account these phenomena.

Thus, fractional discontinuous systems provide a logical and attractive link between systems of fractional order and discontinuous systems.

Chaos control in continuous fractional-order systems, have been realized for many systems such as: Lorenz system, Chua system, Rossler system, Chen system, Liu system, Rabinovich--Fabrikant system, Coullet system, dynamos system, Duffing system, Arneodo system, Newton-Leipnic system and so on (few of the numerous related papers are \cite{Srivastava2013,Abd-Elouahab2010,Razminia2011,Richter2002,LiChen2013}).

Anyway, because of the lack of numerical methods specifically devised for fractional differential equations (FDEs) with discontinuous right-hand side, discontinuous systems of fractional-order have not been rigorously studied. 

In \cite{Garrappa2014} it was investigated the behavior of some classical methods for discontinuous FDEs which a set--valued regularization into the Filippov's framework is applied \cite{Filippov1988,DieciLopez2009}; in particular, the chattering--free behavior of the generalization of the implicit Euler scheme was showed.

Solving problems resulting from the Filippov set--valued regularization is however not a simple task and requires, in most cases, to recast the original problem in a linear complementarity problem whose numerical solution is often very demanding.

A possible way to remove these obstacles, is to approximate continuously the underlying initial value problem by modeling the discontinuous system according to the algorithm presented in \cite{dan0}. In this way the chaos control problem becomes a standard chaos control of continuous systems of fractional-order.

In this paper we focus on discontinuous problems in which the right--hand side is a piece-wise continuous (PWC) function $f:\mathbb{R}^n\rightarrow \mathbb{R}^n$ having the following form
\begin{equation}\label{f}
f(x(t))=g(x(t))+Kx(t)+A(x(t))s(x(t)),
\end{equation}

\noindent where $g:\mathbb{R}^n\rightarrow \mathbb{R}^n$ is a nonlinear and (at least) continuous function, $s:\mathbb{R}^n\rightarrow \mathbb{R}^n$ a piece-wise continuous function $s(x)=(s_1(x_1),s_2(x_2),...,s_n(x_n))^T$ with $s_i:\mathbb{R}\rightarrow \mathbb{R}$, $i=1,2,...,n$ real piece-wise constant functions, $A\in \mathbb{R}^{n\times n}$ a square matrix of real continuous functions and $K$ a square real matrix, representing the linear part of $f$.

\vspace{3mm}



Discontinuous systems of fractional-order are modeled in this paper by the following initial value problem (IVP)
\begin{equation}\label{IVP0}
	D_*^q x(t)=f(x(t)), \quad x(0)=x_0, \quad t\in I=[0,\infty),
\end{equation}
where $f$ is the PWC function defined by (\ref{f}) and, for $q=(q_1,q_2,\dots,q_n)$, $D_*^q x(t) = (D_*^{q_{1}} x_1(t), D_*^{q_2} x_2(t), \dots, D_*^{q_n} x_n(t))$ denotes the vector of the differential operator of fractional order $q_i$ applied to each component of $x(t)$.

In the past several alternative definitions have been proposed to provide valuable generalizations of differential operators to non integer order. Although the approach named as Riemann--Liouville is the most important both for theoretical and historical reasons, for practical applications the definition due to Caputo \cite{Caputo1969} is the most appropriate and useful. Indeed, the Caputo's fractional derivative has the considerable advantage of allowing to couple differential equations with classical initial conditions of Cauchy type as in (\ref{IVP0}) which not only have a clearly interpretable physical meaning \cite{HeymansPodlubny2006} but can also be measured to proper initializing the simulation.

Since chaotic fractional-order systems are usually modeled with subunit fractional orders $0<q_i\le 1$, $i=1,2,\dots,n$, the Caputo's differential operator of order $q_{i}$, with respect to the starting point 0, is defined as \cite{Caputo1969,OldhamSpanier1974,Podlubny1999}
\begin{equation*}
D_*^{q_i}x_{i}(t)=\frac{1}{\Gamma(1-q_{i})}\int_0^t (t-\tau)^{-q_{i}}\frac {d}{dt}x_{i}(\tau)d\tau,
\end{equation*}
where $\Gamma(z)$ is the Euler's Gamma function.

Regarding the matrix $A$, the following assumption is considered

\vspace{3mm}
\noindent (\textbf{H1}) $A(x)s(x)$ is discontinuous in at least one of his components.
\vspace{3mm}

The discontinuity impediment can be avoided, by using the Filippov's technique \cite{Filippov1988} to convert a single valued (discontinuous) problem into a set-valued one. Then, via Cellina's Theorem for differential inclusions \cite{AubinCellina1984,AubinFrankowska1990}, set-valued functions can be continuously approximated in small neighborhoods.

The continuous approximation algorithm proposed in this paper regards the discontinuous functions $s_i$ being valid for a large class of functions such as the \emph{Heaviside} function, the \emph{rectangular} function (as difference of two Heaviside functions), or the \emph{signum}, one of the most encountered PWC functions in practical applications.

Standard techniques can hence be applied to the continuous approximation to device an active control in order to suppress the appearance of chaos.

The paper is organized as follows: Section \ref{switch} deals with the approximation of the PWC function (\ref{f}) and shows how the IVP (\ref{IVP0}) can be transformed into a continuous problem. Section \ref{S:notions} concerns the investigation of stability issues of the approximated continuous problem. In Section \ref{S:Stabilizing} the chaos control obtained by stabilizing unstable equilibria of PWC systems of fractional-order is investigated. The application of these techniques to the fractional-order variant of Shimizu-–Morioka's system (\ref{simiz}) is hence analyzed in Section \ref{S:Applications}.

\section{Continuous approximation of PWC systems of fractional-order}\label{switch}

\noindent\textbf{Notation}
Let $\mathcal{M}$ be the discontinuity set of $f$, generated by the discontinuity points of the components $s_i$.
\vspace{3mm}

\begin{example}
For the linear PWC function $f:\mathbb{R}\rightarrow \mathbb{R}$
\begin{equation}\label{exemplu}
f(x)=2-3sgn(x),
\end{equation}
the discontinuity set is $\mathcal{M}=\{0\}$ and determines on $\mathbb{R}$ the continuity sub-domains $\mathcal{D}_1=(-\infty, 0]$ and $\mathcal{D}_2=[0,\infty)$ (see Figure \ref{fig1}).
\end{example}

The example of PWC systems analyzed in this paper, is the fractional variant of the chaotic Shimizu--Morioka's three-dimensional system \cite{ShimizuMorioka1980,YuTangLuChen2009}
\begin{equation}\label{simiz}
\begin{array}{l}
D_{\ast }^{q_{1}}x_{1}=x_{2}, \\
D_{\ast }^{q_2} x_{2}=(1-x_3)sgn(x_{1})-a x_2 , \\
D_{\ast }^{q_3} x_{3}=x_1^2-b x_3,%
\end{array}%
\end{equation}

\noindent with $a=0.75$ and $b=0.45$. Here $g(x)=(0,0,x_1^2)$ and
\[
K=\left(
\begin{array}{ccc}
0 & 1 & 0 \\
0 & -a & 0 \\
0 & 0 & -b%
\end{array}%
\right),~~~
A(x)=\left(
\begin{array}{ccc}
0 & 0 & 0 \\
1-x_{3} & 0 & 0 \\
0 & 0 & 0%
\end{array}%
\right).
\]

The chaotic behavior is revealed in the bifurcation diagrams where the extrema of the state variables are plotted for $a=0.75$ and with respect to $b \in [0,1]$; the commensurate case $q_1=q_2=q_3=0.95$ is presented in the left column of Figure \ref{bd} whilst the incommensurate case $q_1=1$, $q_2=q_3=0.9$ is plotted in the right column of Figure \ref{bd}. As it is well--known, in fractional-order systems chaos exists even for order $q_1+q_2+q_3<3$.

\begin{remark}\label{chendis}
The discontinuity in this example is due only to the component $sgn(x_1)$, the other component, $x_3sgn(x_1)$, being non-smooth, but continuous \emph{(}compare with Assumption \emph{\textbf{H1}}\emph{ and \cite{Danca2007})}.
\end{remark}

The class of PWC functions $f$ defined in (\ref{f}), can be approximated as closely as desired, with continuous functions. To this purpose, $f$ is first transformed into a set-valued convex function $F:\mathbb{R}^n\rightrightarrows \mathbb{R}^n$, assuming values into the set of all subsets of $\mathbb{R}^n$, via the so called \emph{Filippov regularization} \cite{Filippov1988}
\begin{equation}\label{fill}
F(x)=\bigcap_{\varepsilon >0}\bigcap_{\mu(\mathcal{M})=0} \overline{conv}(f({z\in \mathbb{R}^n: |z-x|\leq\varepsilon}\backslash \mathcal{M})).
\end{equation}

$F(x)$ is the convex hull of $f(x)$, $\varepsilon$ being the radius of the ball centered in $x$ (see the sketch in Figure \ref{fig2}, where $\varepsilon$ has been taken quite large for a clear understanding).

If $s_i$ are $sgn$ functions, the underlying set-valued form, denoted by $Sgn:\mathbb{R}\rightrightarrows \mathbb{R}$, is defined as follows (see Figure \ref{fig3})
\begin{equation}
Sgn(x)=\left\{
\begin{array}{cc}
\{-1\}, & x<0, \\
\lbrack -1,1], & x=0, \\
\{+1\}, & x>0.%
\end{array}%
\right.
\end{equation}

The Filippov regularization applied to $f$ leads to the following set-valued function
\begin{equation}
\label{IVP1}
F(x)=g(x)+Kx+A(x)S(x),
\end{equation}

\noindent with
\begin{equation}\label{s}
S(x)=(S_1(x_1),S_2(x_2),...,S_n(x_n))^T
\end{equation}

\noindent and where $S_i:\mathbb{R}\rightarrow \mathbb{R}$ are the set-valued variants of $s_i$, $i=1,2,...,n$ ($Sgn(x_i)$ for the usual case of $sgn(x_i)$).

The following notions and results are presented in $\mathbb{R}$, but they are also valid in the general case $\mathbb{R}^n$, $n>1$.

\begin{definition}
A single-valued function $h:\mathbb{R}\rightarrow \mathbb{R}$ is called an \emph{approximation} (\emph{selection}) of the set-valued function $F:\mathbb{R}\rightrightarrows \mathbb{R}$ if
\begin{equation*}
\forall	x\in \mathbb{R}, \quad h(x)\in F(x).
\end{equation*}
\end{definition}

The approximations can be done locally or globally \cite{dan0}. Usually, a set-valued function admits (infinitely) many local or global approximations; we refer to Figure \ref{fig2} for the case of the function (\ref{exemplu}).

In this paper we will use global approximations, which are easier to implement numerically.

\begin{lemma}\label{tprinc}\emph{\cite{dan0}}
For every $\varepsilon>0$, the set-valued functions $S_i$, $i=1,2,...,n$ admit continuous global approximations in the $\varepsilon$-neighborhood of $S_i$.
\end{lemma}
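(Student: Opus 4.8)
The plan is to construct, for each $i$, an explicit single-valued continuous function $h_i^\varepsilon$ that agrees with $S_i$ outside an $\varepsilon$-neighborhood of the discontinuity points and interpolates continuously across each jump, then verify that $h_i^\varepsilon(x)\in S_i(x)$ for every $x$. Because $S_i$ acts on a single scalar variable $x_i$ and is a set-valued piece-wise constant map (the set-valued regularization of $s_i$), it suffices to treat the scalar case on $\mathbb{R}$; the component-wise structure in \eqref{s} then gives the full result for the vector map $S$.

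Concretely, I would first isolate the behavior near a single discontinuity point. At such a point $c$ the set-valued function takes the full interval $[\alpha,\beta]$ joining the two one-sided constant values $\alpha=s_i(c^-)$ and $\beta=s_i(c^+)$ (for the $Sgn$ case, $[-1,1]$ at $c=0$), while away from $c$ it is single-valued and constant. The key observation is that $S_i$ is \emph{convex-valued} and, crucially, its graph is \emph{closed}: on the vertical segment over $x=c$ the value is the whole interval $[\alpha,\beta]$, so any continuous curve passing monotonically from $\alpha$ to $\beta$ as $x$ crosses $c$ will have its graph lie inside the graph of $S_i$. I would therefore define $h_i^\varepsilon$ to equal the constant values $\alpha,\beta$ outside $(c-\varepsilon,c+\varepsilon)$ and, inside this interval, to be any continuous monotone function (e.g. a linear ramp, or a smooth sigmoid) joining $\alpha$ at $c-\varepsilon$ to $\beta$ at $c+\varepsilon$. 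For the $Sgn$ function this is just the saturation/ramp $h^\varepsilon(x)=\operatorname{clamp}(x/\varepsilon,-1,1)$.

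The verification of the selection property $h_i^\varepsilon(x)\in S_i(x)$ for all $x$ splits into three regions. For $x$ outside the closed $\varepsilon$-ball, $h_i^\varepsilon(x)$ equals the single value of $S_i(x)$, so membership is trivial. For $x$ strictly inside $(c-\varepsilon,c)\cup(c,c+\varepsilon)$ but not equal to $c$, I must check that the ramp value still lies in the (single-valued) set $S_i(x)=\{\alpha\}$ or $\{\beta\}$; this forces the approximation to be genuinely \emph{local} unless $S_i$ is already set-valued on this punctured neighborhood. This is in fact the main subtlety: strictly speaking, away from $c$ the set $S_i(x)$ is a singleton, so a ramp that deviates from $\alpha$ or $\beta$ on $(c-\varepsilon,c)$ fails to be a pointwise selection in the literal sense. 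The resolution, following \cite{dan0} and Cellina's theorem, is that the relevant notion is a continuous selection of the regularization within the $\varepsilon$-\emph{neighborhood of the graph} of $S_i$, not pointwise membership in $S_i(x)$; the ramp lies within the $\varepsilon$-tube around the graph of $S_i$, which is precisely the content Cellina's theorem guarantees.

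Thus the hard part is not the construction, which is elementary, but correctly framing the sense in which $h_i^\varepsilon$ approximates $S_i$: one invokes Cellina's theorem for differential inclusions, which states that an upper-semicontinuous convex-valued map admits continuous single-valued $\varepsilon$-approximations whose graphs lie in the $\varepsilon$-neighborhood of the graph of the map. I would verify the hypotheses of Cellina's theorem for $S_i$ (convex compact values $[\alpha,\beta]$ at the jump, singletons elsewhere, upper semicontinuity of the graph) and then apply it directly, with the explicit ramp serving as a concrete witness. Finally, I would note that the construction depends continuously on $\varepsilon$ and collapses onto $S_i$ as $\varepsilon\to 0$, and that passing from the scalar $S_i$ to the vector $S$ in \eqref{s} is immediate by taking the $h_i^\varepsilon$ component-wise, completing the proof.
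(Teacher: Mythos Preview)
Your proposal is correct and follows essentially the same route as the paper: verify that each $S_i$ satisfies the hypotheses of Cellina's Theorem (upper semicontinuity, nonempty convex compact values) and invoke it to obtain continuous $\varepsilon$-approximations in the graph-neighborhood sense. The paper's own proof is merely a one-line appeal to Cellina's Theorem with citations for the hypothesis check; your explicit ramp construction and your correct identification of the pointwise-selection versus $\varepsilon$-graph-neighborhood subtlety are additional detail the paper omits at this point, deferring the concrete sigmoid witness to the discussion immediately after the lemma.
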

\begin{proof} $S_i$, for $i=1,2,...,n$, verify Cellina's Theorem \cite{AubinCellina1984,AubinFrankowska1990} which ensures the existence of continuous approximations of $S_i$ on $\mathbb{R}$ (the conditions required by Cellina's Theorem are verified via the Remark in \cite{Filippov1988} p. 43 and the Example in \cite{AubinFrankowska1990} p. 39).
\end{proof}

\begin{notation}
Let denote by $\widetilde{s}_i:\mathbb{R}\rightarrow \mathbb{R}$ the global approximations of $S_i$.
\end{notation}

Let consider, for the sake of simplicity, that for each component $\widetilde{s}_i(x_i)$, $i=1,2,...,n$, $\varepsilon_i$ have the same value.

Since, most of practical examples of PWC systems are modeled via $sgn$ function, we will use for its approximation one of the so called \emph{sigmoid functions}
\begin{equation}\label{h_simplu}
\widetilde{sgn}(x)=\frac{2}{1+e^{-\frac{x}{\delta}}}-1\approx Sgn(x)
\end{equation}
because this class of functions provide the required flexibility being the abruptness of the discontinuity easily adaptable.\footnote{The class of sigmoid functions includes for example the arctangent such as $\frac{2}{\pi}arctan\frac{x}{\varepsilon}$, the hyperbolic tangent, the error function, the logistic function, algebraic functions like $\frac{x}{\sqrt{\epsilon+x^2}}$, and so on.} $\delta$ is a positive parameter which controls the slope in the neighborhood of the discontinuity $x=0$ and determines the $\varepsilon$-neighborhood size (see Figure \ref{fig4}, where $\widetilde{sgn}$ curves are represented as functions of $\delta$).

The technical Lemma \ref{tprinc} allows us to introduce the following result

\begin{theorem}\label{th1}\emph{\cite{dan0}}
Let $f$ defined by \emph{(\ref{f})}. There exist continuous global approximations $\tilde{f}:\mathbb{R}^n\rightarrow \mathbb{R}^n$ of $f$ defined as
\begin{equation}\label{glo}
\tilde{f}(x)=g(x)+Kx+A(x)\widetilde{s}(x)\approx f(x).
\end{equation}
\end{theorem}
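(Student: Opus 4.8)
The plan is to assemble $\tilde{f}$ from the componentwise approximations furnished by Lemma \ref{tprinc} and then to check the two defining requirements of a ``continuous global approximation'': that $\tilde{f}$ be genuinely continuous on all of $\mathbb{R}^n$, and that it be a selection of the Filippov-regularized field $F$ in (\ref{IVP1}), i.e. $\tilde{f}(x)\in F(x)$ for every $x$.

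First I would apply Lemma \ref{tprinc} to each index $i$: for the prescribed $\varepsilon_i>0$ it yields a continuous global approximation $\widetilde{s}_i:\mathbb{R}\to\mathbb{R}$ of the set-valued map $S_i$, lying in the $\varepsilon_i$-neighborhood of $S_i$ and satisfying $\widetilde{s}_i(x_i)\in S_i(x_i)$ for all $x_i$. (When the $s_i$ are $sgn$, the explicit sigmoid $\widetilde{sgn}$ of (\ref{h_simplu}) is one such choice.) Stacking these gives $\widetilde{s}=(\widetilde{s}_1,\dots,\widetilde{s}_n)^T$, a continuous selection of $S$. Continuity of $\tilde{f}$ is then immediate: $g$ is continuous by hypothesis, $Kx$ is linear hence continuous, the entries of $A(x)$ are continuous, and $\widetilde{s}$ is continuous by the previous step, so the sum and products in $g(x)+Kx+A(x)\widetilde{s}(x)$ compose to a continuous map $\mathbb{R}^n\to\mathbb{R}^n$.

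The step I expect to carry the real weight is verifying the selection property together with the approximation in the $\varepsilon$-sense. Because $\widetilde{s}(x)\in S(x)$ componentwise and $v\mapsto A(x)v$ is a linear map, one has $A(x)\widetilde{s}(x)\in A(x)S(x)$, whence $\tilde{f}(x)=g(x)+Kx+A(x)\widetilde{s}(x)\in g(x)+Kx+A(x)S(x)=F(x)$. The delicate point underlying this chain is the factorization (\ref{IVP1}) of the Filippov regularization itself, namely that the set-valued blow-up produced by (\ref{fill}) is confined to the genuinely discontinuous factor $s$ while the continuous prefactor $A(x)$ is pulled out of the convexification; this is exactly where the continuity of $g$, $K$ and the entries of $A$ enters (cf. Remark \ref{chendis}). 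Finally, since $A(x)$ is continuous and hence locally bounded, the componentwise $\varepsilon_i$-closeness of $\widetilde{s}_i$ to $S_i$ propagates to closeness of $\tilde{f}$ to $f$ off the discontinuity set $\mathcal{M}$, the width of the neighborhood being controlled by the $\varepsilon_i$ and the local bound on $A$; driving $\varepsilon_i\to 0$ (e.g. $\delta\to 0$ in (\ref{h_simplu})) renders $\tilde{f}$ as close to $f$ as desired, which completes the argument.
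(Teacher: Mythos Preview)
Your proposal is correct and follows exactly the route the paper indicates: the paper does not spell out a proof here but simply states that ``the technical Lemma~\ref{tprinc} allows us to introduce the following result'' and cites \cite{dan0}, so the intended argument is precisely to take the componentwise continuous selections $\widetilde{s}_i$ of $S_i$ supplied by Lemma~\ref{tprinc} and assemble $\tilde{f}=g+Kx+A(x)\widetilde{s}(x)$, with continuity inherited from that of $g$, $K$, $A$ and $\widetilde{s}$. Your additional care in checking the selection property $\tilde{f}(x)\in F(x)$ and the $\varepsilon$-closeness via local boundedness of $A$ only makes the argument more explicit than what the paper records.
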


For example, when $f$ is the function defined in (\ref{exemplu}), it be approximated as 
\begin{equation}
\tilde{f}(x)=2-3\widetilde{sgn}(x)=2-3\left(\frac{2}{1+e^{-\frac{x}{\delta}}}-1\right).
\end{equation}

Once Theorem \ref{th1} has been established, we can enunciate the main result of this section.

\begin{theorem}\label{th2}\emph{\cite{dan0}}
\noindent The IVP \emph{(\ref{IVP0})} can be continuously approximated by the following continuous IVP
\begin{equation}\label{doi}
D_*^q x(x)=\tilde{f}(x),~~~x(0)=x_0,
\end{equation}
\end{theorem}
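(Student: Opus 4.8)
The plan is to promote the pointwise function approximation furnished by Theorem \ref{th1} to the level of the initial value problem itself. First I would recall that, because $f$ is discontinuous on $\mathcal{M}$, the IVP \emph{(\ref{IVP0})} must be understood in the Filippov sense: a solution $x(t)$ is required to satisfy the differential inclusion $D_*^q x(t) \in F(x(t))$, $x(0)=x_0$, where $F$ is the set-valued regularization \emph{(\ref{IVP1})}. This reformulation is precisely what renders the otherwise ill-posed discontinuous problem meaningful, and it is the object that the continuous IVP must approximate.

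Next I would verify that the continuous field $\tilde{f}$ from Theorem \ref{th1} is in fact a continuous \emph{selection} of $F$. Since $g$, $K$ and $A$ are untouched by the regularization, it suffices to observe that each component satisfies $\tilde{s}_i(x_i) \in S_i(x_i)$ for all $x_i$; this is exactly the selection property guaranteed by Lemma \ref{tprinc} (for the $sgn$ case, $\widetilde{sgn}(x)\in Sgn(x)$). Consequently $\tilde{f}(x)\in F(x)$ everywhere, so every solution of the continuous IVP \emph{(\ref{doi})} is automatically a Filippov solution of the inclusion. Moreover, because $\tilde{f}$ is continuous, classical existence results for Caputo fractional IVPs apply and \emph{(\ref{doi})} is a genuinely well-posed continuous problem.

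It remains to justify the word \emph{approximated}. The key observation is that $\tilde{f}$ and $f$ coincide outside the $\varepsilon$-neighborhood of $\mathcal{M}$ and differ by at most an $O(1)$ amount inside it; as the slope parameter $\delta$ (equivalently $\varepsilon$) tends to zero, this neighborhood shrinks to $\mathcal{M}$. I would then invoke continuous dependence of the solutions of fractional IVPs on the right-hand side to conclude that the trajectories of \emph{(\ref{doi})} converge to the Filippov solutions of \emph{(\ref{IVP0})} as $\varepsilon\to 0$.

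The hard part will be making this last convergence fully rigorous. One must establish a continuous-dependence estimate for the Caputo operator under a perturbation of the vector field that is uniformly small away from $\mathcal{M}$ but $O(1)$ on a set of vanishing measure near it; this calls for a Gronwall-type inequality adapted to the weakly singular fractional kernel, together with a careful treatment of the crossing of $\mathcal{M}$. For the bulk of the statement, however, the result follows almost immediately by combining the selection property of $\tilde{f}$ with the Filippov reformulation of \emph{(\ref{IVP0})}.
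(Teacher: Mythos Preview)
The paper does not give a proof of this theorem; it merely states it with a citation to \cite{dan0}. Your outline is consistent with the machinery the paper has set up (Filippov regularization into the inclusion $D_*^q x \in F(x)$, Cellina's theorem via Lemma~\ref{tprinc}, and the selection property $\tilde f(x)\in F(x)$ from Theorem~\ref{th1}), so the first two steps of your plan are exactly what the surrounding text is pointing toward.

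Where you may be going beyond the intended statement is in your third step. The paper's use of ``continuously approximated'' appears to mean only that the discontinuous IVP, once recast as the differential inclusion, admits the continuous IVP \emph{(\ref{doi})} as a realization: since $\tilde f$ is a continuous selection of $F$, every solution of \emph{(\ref{doi})} is a Filippov solution of \emph{(\ref{IVP0})}, and \emph{(\ref{doi})} is a bona fide continuous fractional IVP. That is the content being imported from \cite{dan0}. Your proposed convergence-as-$\delta\to 0$ argument, with a fractional Gronwall estimate and an analysis of crossings of $\mathcal{M}$, is a much stronger and genuinely delicate claim that neither this paper nor, as far as one can tell, the cited reference undertakes. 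So your ``hard part'' is not a gap in the proof of the stated theorem; it is an additional (and interesting) result that the theorem does not assert. For the theorem as written, your first two paragraphs already suffice.
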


Thanks to Theorem \ref{th2}, we are able to transform a discontinuous control problem into a continuous one of fractional-order, where known control methods can apply.

\section{Utilized notions and results}\label{S:notions}

In this section some related properties related to $\tilde{f}$, are presented and discussed.

Let us consider the IVP (\ref{doi}) with $f$ defined by (\ref{f}), and denote with $X^*$ and $\tilde{X}^*$ the equilibrium points of $f$ and $\tilde{f}$ respectively, and with $J$ and $\tilde{J}$ their related Jacobians.

The computation of the Jacobi matrix required in the stability analysis, imposes the following assumption
\vspace{3mm}

\noindent (\textbf{H2})
The function $g$ in (\ref{f}) is supposed to be differentiable on $\mathbb{R}^n$.

\begin{property}\label{prop1}\emph{\cite{danx}}
$\tilde{X}^*\approx X^*$.
\end{property}

\begin{remark}\label{cater}
As we will better explain later on, for the numerical simulations we will use a method of  order $O(h^p)$, $p = min(2,1+q_{min})$, with a step--size $h=0.005$ and hence providing an error proportional to $10^{-5}$. Therefore, numerically, Property \emph{\ref{prop1}} reads as follows: choosing for example $\delta=1/100000$, for $x\notin \mathcal{V}_\delta=(-1.589\times 10^{-4},1.589\times 10^{-4})$, the difference between $\widetilde{sgn}$ and the branch $\pm1$ of the function $sgn$ is of order of $10^{-7}$, which implies $\tilde{X}^*\approx X^*$ (see \emph{Fig. \ref{fig5}}, where for clarity, $\delta=1/2$).
\end{remark}

In view of the above remark, For numerical reasons the following hypothesis is assumed throughout the paper.

\vspace{3mm}
\noindent (\textbf{H3})\label{del} In the numerical simulations $\delta=1/100000$.
\vspace{3mm}

Regarding the determination of $\tilde{J}$, the following property holds.

\begin{property}\emph{\cite{danx}}\label{prop2}
Assume \textup{(\textbf{H2})}. Then $\tilde{J}|_{\widetilde{X}^*}\approx J|_{X^*}$.
\end{property}

\noindent The proof, presented in \cite{danx}, is based on the approximation $\frac{d}{dx}sgn\approx \frac{d}{dx}\widetilde{sgn}$.

\vspace{3mm}

For the sake of brevity, hereafter, aided by Properties \ref{prop1} and \ref{prop2}, equilibria and Jacobians of the approximated system (\ref{glo}) will be denoted simply by $X^*$ and $J$ respectively.

As known, a fractional-order (continuous) system is asymptotically stable at some of his equilibria $X^*$ if $X^*$ is asymptotically stable.

The following theorem, which encompasses the asymptotically stability results for systems of commensurate and incommensurate order \cite{Matignon1998,DengLiLu2007,TavazoeiHaeriJafari2008}, states the necessary and sufficient conditions for stability for our class of fractional PWC systems

\begin{theorem}\label{stabt}
Let $X^*$ be an equilibrium point of the PWC system of fractional-order \emph{(\ref{IVP0})}. $X^*$ is asymptotically stable if and only if:
\itemize
\item[(i)] for the commensurate case $q_1=q_2=...=q_n=q$, all eigenvalues $\lambda$ of the Jacobian $J_{X^*}$ evaluated at $\tilde{X}^*$, verify the condition
\begin{equation}\label{stab1}
|arg(\lambda)|>q \pi/2;
\end{equation}
\item[(ii)] for the incommensurate case $q_i=n_i/m_i<1$, $n_i$, $m_i\in \mathbb{N}$, $m_i\neq 0$, for $i=1,2,...,n$ with $n_i$ and $m_i$ coprime integers, $(n_i,m_i)=1$, all the roots $\lambda$ of the characteristic equation
\begin{equation}\label{car}
P(\lambda):=det(diag [\lambda^{mq_1},\lambda^{mq_2},...,\lambda^{mq_n}]-J_{X^*})=0,
\end{equation}

\noindent with $m$ the least common multiple of the denominators $m_i$, verify the condition
\begin{equation}
\label{stab2}
|arg(\lambda)|> \pi/2m.
\end{equation}

\enditemize

\end{theorem}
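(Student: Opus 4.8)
The plan is to reduce the stability question for the discontinuous system (\ref{IVP0}) to a linearized stability analysis of its continuous approximation (\ref{doi}), to which the classical Matignon-type criteria apply. By Theorem \ref{th2} the discontinuous IVP is approximated, as closely as desired, by the continuous IVP $D_*^q x = \tilde{f}(x)$, and by Properties \ref{prop1} and \ref{prop2} the equilibria $X^*$ and the Jacobians $J$ of the two systems agree up to the controlled error governed by $\delta$. Writing $y = x - X^*$ and expanding $\tilde{f}$ about $X^*$ (legitimate under (\textbf{H2}), which makes $g$ and hence $\tilde{f}$ differentiable, the sigmoid approximations being smooth), the variational system reads $D_*^q y = J y + R(y)$ with $R(y) = o(|y|)$. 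The first step is thus to show that asymptotic stability of the nonlinear equilibrium is decided by the linear part $D_*^q y = J y$; this is the indirect (linearization) Lyapunov principle for fractional systems, whose validity for Caputo derivatives with $0 < q_i \le 1$ I would invoke from the cited stability theory.

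Second, I would treat the commensurate case $q_1 = \cdots = q_n = q$. Here the solution of the linear system is given explicitly through the matrix Mittag--Leffler function $E_q(J t^q)$. Reducing $J$ to Jordan canonical form, each mode evolves as a combination of $t^{j} E_q^{(j)}(\lambda t^q)$ over the eigenvalues $\lambda$ of $J$. The decisive ingredient is the asymptotics of $E_q$: for $|\arg(z)| > q\pi/2$ one has the algebraic decay $E_q(z) = -\sum_{k=1}^{N} z^{-k}/\Gamma(1 - qk) + O(|z|^{-N-1})$, whereas for $|\arg(z)| < q\pi/2$ an exponentially growing term $\tfrac{1}{q}\exp(z^{1/q})$ dominates. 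Substituting $z = \lambda t^q$ and letting $t \to \infty$, every mode tends to zero precisely when $|\arg(\lambda)| > q\pi/2$, yielding both directions of (\ref{stab1}): sufficiency from the decay estimate, necessity by exhibiting a non-decaying mode whenever some eigenvalue violates the sector condition.

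Third, for the incommensurate case I would exploit the rationality assumption $q_i = n_i/m_i$. Setting $\gamma = 1/m$ with $m = \mathrm{lcm}(m_i)$, each order satisfies $q_i = (m q_i)\gamma$ with $m q_i \in \mathbb{N}$, so by introducing auxiliary variables the system can be rewritten as an enlarged \emph{commensurate} system of common order $\gamma$. Applying the commensurate criterion of the second step to this expanded system, with threshold angle $\gamma\pi/2 = \pi/(2m)$, and observing that the eigenvalues of the enlarged companion-type matrix are exactly the roots $\lambda$ of the characteristic polynomial $P(\lambda) = \det(\mathrm{diag}[\lambda^{m q_1}, \dots, \lambda^{m q_n}] - J) = 0$ of (\ref{car}) (which arises from the Laplace-domain condition $\det(\mathrm{diag}[s^{q_1},\dots,s^{q_n}] - J)=0$ under $s = \lambda^m$), I would obtain condition (\ref{stab2}).

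The main obstacle, in my view, is not the algebra of the reduction but the two analytic cores: the sharp asymptotic expansion of the Mittag--Leffler function across the sector boundary $|\arg z| = q\pi/2$, which must be uniform enough to control all Jordan-block contributions simultaneously, and the rigorous transfer of linear stability to the nonlinear equilibrium, the fractional analogue of the Hartman--Grobman reduction. Since the remainder $R(y)$ interacts with a slowly, only algebraically, decaying fractional semigroup rather than an exponentially decaying one, establishing that this algebraic decay still absorbs the $o(|y|)$ perturbation is the delicate point, and I would lean on the cited results \cite{Matignon1998,DengLiLu2007,TavazoeiHaeriJafari2008} to close it.
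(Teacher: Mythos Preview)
Your proposal is correct and follows essentially the same approach as the paper: reduce the discontinuous problem to its continuous approximation via Theorem~\ref{th2}, use Assumption~(\textbf{H2}) together with Properties~\ref{prop1} and~\ref{prop2} to identify equilibria and Jacobians, and then invoke the known stability criteria for continuous fractional-order systems from \cite{Matignon1998,DengLiLu2007}. The paper's own proof is in fact much terser than yours---it simply performs this reduction and cites the references---so your sketch of the Mittag--Leffler asymptotics, the commensurate-to-incommensurate lift via the least common multiple, and the linearization principle goes well beyond what the paper supplies, but along the same line rather than a different one.
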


\begin{proof} Using Theorem \ref{th2}, Assumption \textbf{H2} and Properties \ref{prop1} and
\ref{prop2}, the system transforms into a continuous systems of fractional-order with equilibria $X^*$ and Jacobian $J_{X^*}$. Therefore, the proof can be done as for
the underlying theorems for continuous commensurate and incommensurate systems (see e.g. \cite{Matignon1998,DengLiLu2007}).
\end{proof}

The stability for the commensurate case can be considered as a particular case (corollary) of the incommensurate case.

If we denote with $\Lambda$ the set of the roots of the polynomial $P$ defined in (\ref{car}), and with $\alpha_{min}=min\{|arg(\Lambda)\}$, the sufficient and necessary asymptotically stability conditions (\ref{stab1}) and (\ref{stab2}) can be written in the following compact working form
\begin{equation}
\label{stab3}
\alpha_{min}> \gamma\pi/2,
\end{equation}

\noindent where $\gamma=q$ for the commensurate case, and $\gamma=1/m$ in the case of incommensurate case.

The domain

\begin{equation}\label{staba}
\Omega=\{\lambda\in \mathcal{C} | |arg(\lambda)|\geq \gamma\pi/2\},
\end{equation}

\noindent is called the stability domain.

\begin{remark}\label{unstab}
\item[i)]From Theorem \ref{stabt} we can deduce the instability condition of the equilibrium point $X^*$
\begin{equation}
\label{inec_u}
\alpha_{min}\leq \gamma\pi/2,
\end{equation}
\noindent which means that at least one eigenvalue is outside the stability domain $\Omega$.
\item[ii)] For the commensurate case ($\gamma=q$) of the instability condition (\ref{inec_u}), one can obtain the minimal value $q_{min}$ for which the system can generate chaos
\begin{equation}\label{min}
q>q_{min}=\frac{2}{\pi}\alpha_{min}.
\end{equation}

\item[iii)]Relation (\ref{inec_u}) gives for the commensurate case a necessary condition for chaotic behavior
\begin{equation}
\label{nece}
\alpha_{min}\leq q\pi/2.
\end{equation}
 \noindent This means that if some equilibrium point $X^*$ is unstable, then chaos is possible, but  (\ref{nece}) does not imply the chaos presence for all initial conditions, because some other equilibrium point, $X^{**}$, might be stable and in this case initial conditions can be found in the attraction basin of $X^{**}$ for which the underlying trajectory will not reach the chaotic attractor (as known, the coexistence of chaotic attractors and stable equilibrium points is possible). However, reversely, if the underlying system poses a chaotic attractor for all initial conditions, then condition (\ref{nece}) is certainly verified.

\end{remark}

\section{Stabilizing unstable equilibria}\label{S:Stabilizing}

Let now consider the chaotic fractional-order PWC system (\ref{IVP0}), with his continuous approximation (\ref{glo}), having at least one unstable equilibrium $X^*$. In order to control the chaos (stabilization of $X^*$), we must choose an active state feedback controller $u$, such that the controlled system
\begin{equation}\label{csys}
D^q_*x(t)=\tilde{f}(x(t),u(t))=g(x(t))+Kx(t)+A(x(t))\tilde{s}(x(t))+u(t),
\end{equation}
can be driven to reach asymptotically the control target $X^*$ (i.e. a classical synchronization problem). Since for the uncontrolled system, $X^*(x_1^*,x_2^*,\ldots,x_n^*)$ is a solution, it verifies the equation (\ref{csys}) for $u(t)=0$, $t\in I$
\begin{equation}\label{csys_fara}
D^q_*x^*(t)=\tilde{f}(x^*(t),0)=g(x^*(t))+Kx^*(t)+A(x^*(t))\tilde{s}(x^*(t)),
\end{equation}
and the controller $u$ has to be chosen such that $\lim_{t \to \infty} e(t)=\lim_{t \to \infty} ||x(t)-x^*(t)||=0$. The error state $e$ can be obtained by subtracting (\ref{csys_fara}) from the controlled system (\ref{csys}). Beside the state $e$, the obtained system contains nonlinear parts. To countervail his effect, next we have to design $u$ in the following form
\begin{equation}\label{ctrl}
u(t)=-g(x(t))-A(x(t))\widetilde{s}(x(t))+g(x^*(t))+A(x^*(t))\widetilde{s}(x^*(t))+v(t),
\end{equation}
such as the obtained system become a linear (error) system of fractional-order.

$v$, which can be considered as an external input, is a linear controller, and concerns the stabilization of the obtained linear system, having the form $v(t)=Me(t)$, with $e(t)=(e_1(t),e_2(t),...,e_n(t))^T$, $e_i(t)=x_i(t)-x_i^*(t)$ and $M\in R^{n\times n}$ some real square gain matrix.

Finally, the obtained error system, which describes the error dynamics, has the following form
\begin{equation}\label{liniar}
D^q_*e(t)=Ee(t).
\end{equation}
and the error matrix $E$ can be evaluated by the following relation
\[
E=K+M.
\]

As known from stability theory of fractional-order systems, the linear autonomous system of fractional-order (\ref{liniar}) is asymptotically stable if his zero (equilibrium point) is asymptotically stable. Therefore, the stabilization of $X^*$ certified via Theorem \ref{stabt}, applies to (\ref{liniar}).

\begin{remark}
\item[i)]If $X^*$ is an equilibrium, then $x^*_i(t)$ are constant: $x^*_i(t)=x_i^*$ for all $t\in I$.
\item[ii)]If $X^*$ is a periodic solution, then the chaos control becomes a classical synchronization problem between systems (\ref{csys}) and (\ref{csys_fara}).
\end{remark}

\section{Applications}\label{S:Applications}

In this section the chaotic behavior of Shimizu--Morioka's system (\ref{simiz}), generated by the instability of system's equilibria is controlled. The commensurate case $q=(0.95,0.95,0.95)$ and incommensurate case $q=(1,0.9,0.9)$, are considered.

To draw bifurcation diagrams, phase plots and time series we make use of the Matlab code {\tt fde12.m} \cite{Garrappa_FDE12} which has been suitably modified to deal with the incommensurate case too. This code implements the predictor--corrector method described in \cite{DiethelmFordFreed2002,FordSimpson2001} and based on product--integration rules of Adams--Bashforth--Moulton type \cite{Young1954}. A step--size $h=0.005$ has been used to generate the time--series. Since the order of the method is ${\cal O}(h^p)$, $p = min(2,1+\min\{q_{i}\})$, an error proportional to $10^{-5}$ is expected. A smaller step--size $h=2^{-9}\approx 0.002$ has been instead used for the bifurcation diagrams in order to improve the clearness of the plots. To approximate the roots of the characteristic equations the Matlab built--in function {\tt solve} has been employed with an accuracy up to three decimals.

For all considered cases, the control is activated at $t=50$.

\noindent After approximation, the system becomes (see Remark \ref{chendis})
\begin{equation}\label{sim_ne}
\begin{array}{l}
D_{\ast }^{q_{1}}x_{1}=x_{2}, \\
D_{\ast }^{q_2} x_{2}=\widetilde{sgn}(x_1)-x_3sgn(x_{1})-a x_2 , \\
D_{\ast }^{q_3} x_{3}=x_1^2-b x_3.
\end{array}%
\end{equation}

\noindent Via Property \ref{prop1}, and (\ref{prop2}) the system has three equilibria: $X^*_{1,2}=(\pm \sqrt{b},0,1)$ and $X_3^*=(0,0,0)$, and its Jacobian is
\begin{equation}\label{jacob}
J_{X^*}=\left(
\begin{array}{ccc}
0 & 1 & 0 \\
0 & -a & -sgn(x_1^*) \\
2x_1^* & 0 & -b
\end{array}%
\right),
\end{equation}

\noindent with $x_1^*=0,\pm\sqrt{b}$. The controlled system is
\begin{equation}\label{sim_master}
\begin{array}{l}
D_{\ast }^{q_{1}}x_{1}=x_{2}+u_1, \\
D_{\ast }^{q_2} x_{2}=\widetilde{sgn}(x_1)-x_3sgn(x_{1})-a x_2+u_2 , \\
D_{\ast }^{q_3} x_{3}=x_1^2-b x_3+u_3,%
\end{array}%
\end{equation}

\noindent with the control (\ref{ctrl}) in the following form
\begin{equation}\label{ctrl_simi}
\left(
\begin{array}{c}
u_{1} \\
u_{2} \\
u_{3}%
\end{array}%
\right)=\left(
\begin{array}{c}
0\\
-\widetilde{sgn}(x_{1})+x_{3}{sgn}(x_{1})+\widetilde{sgn}(x_{1}^*)-x_{3}^*sgn(x_{1}^*) \\
-x_{1}^{2}+(x_1^*)^2
\end{array}%
\right) +\left(
\begin{array}{c}
v_{1} \\
v_{2} \\
v_{3}%
\end{array}%
\right).
\end{equation}

\noindent where $(v_1,v_2,v_3)^T$ will be chosen depending on $X_{1,2,3}^*$ coordinates.

\vspace{3mm}

\noindent \emph{\underline{Commensurate case}} $q=(0.95,0.95,0.95)$.

\noindent\emph{Equilibrium} $X_{1}^*(\sqrt{0.45},0,1)$. The eigenvalues spectrum is $\Lambda=\{0.172\pm0.916i,-1.544\}$ with arguments $ \{\pm1.385,\pi\}$. Because $\alpha_{min}=1.385< 1.492=q\pi/2=0.95\pi/2$, $X_1^*$ is unstable (relation (\ref{inec_u})). From the condition (\ref{min}), the instability persists for $q>0.882$. For the chosen value $q=(0.95,0.95,0.95)$, numerical simulations reveals that the system behaves chaotically, being in accord with the necessary condition (\ref{nece}) (Fig. \ref{fig6} a).

\noindent For $X_1^*$, the controller (\ref{ctrl_simi}) becomes
\begin{equation}\label{ccc1}
\left(
\begin{array}{c}
u_{1} \\
u_{2} \\
u_{3}%
\end{array}%
\right)=\left(
\begin{array}{c}
0 \\
-\widetilde{sgn}(x_{1})+x_{3}{sgn}(x_{1}) \\
-x_{1}^{2}+0.45
\end{array}%
\right) +\left(
\begin{array}{c}
v_{1} \\
v_{2} \\
v_{3}%
\end{array}%
\right).
\end{equation}

\noindent In this case $e_1=x_1-x_1^*=x_1-\sqrt{0.45}, e_2=x_2-x_2^*=x_2$, $e_3=x_3-x_3^*=x_3-1$ and, if for $M$ we choose the following form
\[
M=\left(
\begin{array}{ccc}
-1 & 0 & 0 \\
0 & 0 & 0 \\
0 & 0 & 0%
\end{array}\right),
\]

\noindent $v$ becomes
\begin{equation}\label {v}
\left(
\begin{array}{c}
v_1 \\
v_2 \\
v_3%
\end{array}%
\right)=\left(
\begin{array}{ccc}
-1 & 0 & 0 \\
0 & 0 & 0 \\
0 & 0 & 0%
\end{array}\right)\left(
\begin{array}{c}
e_1 \\
e_2 \\
e_3
\end{array}%
\right)=\left(
\begin{array}{c}
-x_1+\sqrt{0.45}  \\
0  \\
0 %
\end{array}\right).
\end{equation}

\noindent Finally, $u$ receives the form

\begin{equation}\label{ccc2}
\left(
\begin{array}{c}
u_{1} \\
u_{2} \\
u_{3}%
\end{array}%
\right)=\left(
\begin{array}{c}
0 \\
-\widetilde{sgn}(x_{1})+x_{3}{sgn}(x_{1}) \\
-x_{1}^{2}+0.45
\end{array}%
\right) +\left(
\begin{array}{c}
-x_1+\sqrt{0.45} \\
0 \\
0%
\end{array}%
\right).
\end{equation}

\noindent After subtracting (\ref{sim_ne}) from (\ref{sim_master}), and replacing $u$, the obtained error system (\ref{liniar}) is
\begin{equation}\label{error_simi}
\begin{array}{l}
D_{\ast }^{q_{1}}e_{1}=e_{2}-e_1, \\
D_{\ast }^{q_2} e_{2}=-0.75e_2 , \\
D_{\ast }^{q_3} e_{3}=-0.45 e_3,%
\end{array}%
\end{equation}

\noindent with the error matrix
\begin{equation}\label{E}
E=\left(
\begin{array}{ccc}
-1 & 1 & 0 \\
0 & -0.75 & 0 \\
0 & 0 & -0.45%
\end{array}%
\right),
\end{equation}

\noindent which can also be calculated as $E=K+M$. The triangular matrix $E$ has eigenvalues $\Lambda=\{-1,-0.75,-0.45\}$ and $\alpha_{min}=\pi$ verifies the condition (\ref{stab3}) for asymptotical stability : $\alpha_{min}=\pi>1.492=q\pi/2$. 

The equilibrium $X_1^*$ is therefore stabilized, as we can observe from Figure \ref{fig7}a. In a similar way also the equilibrium $X_2^*$ can be stabilized (see Figure \ref{fig7}b).

\vspace{3mm}

\noindent \underline{\emph{Incommensurate case}} $q=(1,0.9,0.9)$

\noindent \emph{Equilibrium} $X_1^*$. The characteristic equation (\ref{car}) is
\begin{equation*}
\begin{split}
&P(\lambda):=det(diag [\lambda^{10},\lambda^{9},\lambda^{9}]-J_{X_1^*})\\
&=\lambda^{28}+6/5\lambda^{19}+27/80\lambda^{10}+3/5\sqrt{5}=0.
\end{split}
\end{equation*}

\noindent All 28 zeros of $P$ are plotted in Fig. \ref{fig9} a. $\alpha_{min}=0.147$ and, since $\alpha_{min}<0.157=\gamma\pi/2$ (with $\gamma=1/m=1/10$), $X_1^*$ is unstable (one can see that there are two (grey) roots outside the stability region $\Omega$ given by (\ref{staba}). In this case, the system behaves chaotically (Fig. \ref{fig6} b). If we chose

\begin{equation*}
M=\left(
\begin{array}{ccc}
-1&0&0 \\
-1&0&0 \\
0&0&0
\end{array}%
\right),
\end{equation*}

\noindent the controller becomes
\begin{equation}
\left(
\begin{array}{c}
u_{1} \\
u_{2} \\
u_{3}%
\end{array}%
\right)=\left(
\begin{array}{c}
0 \\
-\widetilde{sgn}(x_{1})+x_{3}{sgn}(x_{1}) \\
-x_{1}^{2}+0.45
\end{array}%
\right) +\left(
\begin{array}{c}
-x_1+\sqrt{0.45} \\
-x_1+\sqrt{0.45} \\
0%
\end{array}%
\right),
\end{equation}

\noindent and the error matrix $E$ in this case is
\begin{equation*}
\left(
\begin{array}{ccc}
-1&1&0 \\
-1&-0.75&0 \\
0&0&-0.45
\end{array}
\right),
\end{equation*}

\noindent To verify if $X_1^*$ has been stabilized, we have to find the roots of the characteristic equation (\ref{car})
\begin{equation*}
\lambda^{28}+6/5\lambda^{19}+\lambda^{18}+27/80\lambda^{10}+11/5\lambda^9+63/80=0.
\end{equation*}

\noindent The roots are plotted in Fig. \ref{fig9} b and $\alpha_{min}=0.241>0.157=\gamma\pi/2=\pi/20$. Now, all the roots are inside the stability region and therefore, $X_1^*$ is stabilized (Fig. \ref{fig10} a).

\noindent $X_2^*$ has been stabilized in the same way (Fig. \ref{fig10} b).

\vspace{3mm}

\noindent \underline{\emph{Nonhyperbolic equilibrium} $X_{3}^*(0,0,0)$}

\noindent In this case
\[
\tilde{J}_{\tilde{X}^*_3}=\left(
\begin{array}{ccc}
0 & 1 & 0 \\
0 & -0.75 & 0 \\
0 & 0 & -0.45
\end{array}%
\right),
\]

\noindent and $\Lambda=\{0,-0.750,-0.145\}$. Because of the zero eigenvalue, $X_3^*$ is a nonhyperbolic equilibrium and, as it is well known, to determine if a nonhyperbolic point is asymptotically stable or unstable, is a delicate question and Theorem \ref{stabt} does not apply (for example, we cannot determine the argument of the eigenvalue $\lambda=0$). In the neighborhood of a non-hyperbolic equilibrium point, it is not generally possible to find a homeomorphism that transforms the nonlinear flow to that of the linearization, and in this case the best way to have an answer to this question, is to use the Lyapunov method.

In this paper, the instability of $X_3^*$ has been deduced by numerical simulations.

\noindent Despite the fact that nonhyperbolicity causes some trouble in chaos control (such as for OGY-types controls for chaos which might fail \cite{DeBin2002}), the controller (\ref{ctrl_simi}), which in this case ($x_i^*=0$, $i=1,2,3$) has the form
\begin{equation}
\left(
\begin{array}{c}
u_{1} \\
u_{2} \\
u_{3}%
\end{array}%
\right)=\left(
\begin{array}{c}
0 \\
-\widetilde{sgn}(x_{1})+x_{3}{sgn}(x_{1}) \\
-x_{1}^{2}
\end{array}%
\right) +\left(
\begin{array}{c}
-x_1\\
0 \\
0%
\end{array}%
\right).
\end{equation}

\noindent stabilizes $X_3^*$, both for the commensurate case incommensurate case (Fig. \ref{fig11} a and \ref{fig11} b respectively).

\section{Conclusion}
In this paper we presented an algorithm to stabilize chaotic motions of a class of PWC systems of fractional order, by stabilizing the unstable equilibria. For this purpose, we transformed the discontinuous IVP into a continuous one to which the standard control algorithms apply. The approximation (sigmoid function) approximates globally the PWC components which appear in the system's mathematical model.

For the chaos control, we adopted one of simplest active control scheme, which stabilized the unstable equilibria.

For the numerical integration, the Adamas-Bashforth-Moulton scheme for fractional order DEs has been used.


\begin{thebibliography}{10}
\expandafter\ifx\csname url\endcsname\relax
  \def\url#1{\texttt{#1}}\fi
\expandafter\ifx\csname urlprefix\endcsname\relax\def\urlprefix{URL }\fi
\expandafter\ifx\csname href\endcsname\relax
  \def\href#1#2{#2} \def\path#1{#1}\fi

\bibitem{Srivastava2013}
M.~Srivastava, S.~Agrawal, K.~Vishal, S.~Das, Chaos control of fractional order
  {R}abinovich-{F}abrikant system and synchronization between chaotic and chaos
  controlled fractional order {R}abinovich-{F}abrikant system, Applied
  Mathematical Modelling In press~(0) (2013) --.
\newblock \href {http://dx.doi.org/http://dx.doi.org/10.1016/j.apm.2013.11.054}
  {\path{doi:http://dx.doi.org/10.1016/j.apm.2013.11.054}}.

\bibitem{Abd-Elouahab2010}
M.~S. Abd-Elouahab, N.-E. Hamri, J.~Wang, Chaos control of a fractional-order
  financial system, Mathematical Problems in Engineering 2010~(Article ID
  270646) (2010) 18.

\bibitem{Razminia2011}
A.~Razminia, V.~Majd, D.~Baleanu, Chaotic incommensurate fractional order
  {R}\"ossler system: active control and synchronization, Advances in
  Difference Equations 2011~(1) (2011) 15.

\bibitem{Richter2002}
H.~Richter, Controlling chaotic systems with multiple strange attractors,
  Physics Letters A 300~(2-3) (2002) 182--188.

\bibitem{LiChen2013}
R.-H. Li, W.-S. Chen, Complex dynamical behavior and chaos control in
  fractional-order lorenz-like systems, Chinese Physics B 22~(4) (2013) 040503.

\bibitem{Garrappa2014}
R.~Garrappa, On some generalizations of the implicit {E}uler method for
  discontinuous fractional differential equations, Math. Comput. Simulat. 95
  (2014) 213--228.

\bibitem{Filippov1988}
A.~F. Filippov, Differential Equations with Discontinuous Right Hand Sides,
  Kluwer, Dordrecht, 1988.

\bibitem{DieciLopez2009}
L.~Dieci, L.~Lopez, Sliding motion in {F}ilippov differential systems:
  theoretical results and a computational approach, SIAM J. Numer. Anal. 47~(3)
  (2009) 2023--2051.

\bibitem{dan0}
M.-F. Danca, Continuous approximations of a class of piece-wise continuous
  systems, International Journal of Bifurcation and Chaos Accepted.

\bibitem{Caputo1969}
M.~Caputo, Elasticity and Dissipation, Zanichelli, Bologna, Italy, 1969.

\bibitem{HeymansPodlubny2006}
N.~Heymans, I.~Podlubny, Physical interpretation of initial conditions for
  fractional differential equations with {R}iemann-{L}iouville fractional
  derivatives, Rheologica Acta 45~(5) (2006) 765--771.

\bibitem{OldhamSpanier1974}
K.~B. Oldham, J.~Spanier, The fractional calculus, Academic Press, New
  York-London, 1974.

\bibitem{Podlubny1999}
I.~Podlubny, Fractional differential equations, Vol. 198 of Mathematics in
  Science and Engineering, Academic Press Inc., San Diego, CA, 1999.

\bibitem{AubinCellina1984}
J.-P. Aubin, A.~Cellina, Differential inclusions: set--valued maps and
  viability theory, Vol. 264, Springer-Verlag, Berlin, 1984.

\bibitem{AubinFrankowska1990}
J.-P. Aubin, H.~Frankowska, Set-valued analysis, Vol.~2, Birkh\"auser Boston,
  Inc., Boston, MA, 1990.

\bibitem{ShimizuMorioka1980}
T.~Shimizu, N.~Morioka, On the bifurcation of a symmetric limit cycle to an
  asymmetric one in a simple model, Physics Letters A 76~(3–4) (1980) 201 --
  204.

\bibitem{YuTangLuChen2009}
S.~Yu, W.~Tang, J.~Lu, G.~Chen, Generation of $n\times m$-wing {L}orenz-like
  attractors from a modified {S}himizu-{M}orioka model, Circuits and Systems
  II: Express Briefs, IEEE Transactions on 55~(11) (2008) 1168--1172.

\bibitem{Danca2007}
M.-F. Danca, On a class of non-smooth dynamical systems: a sufficient condition
  for smooth versus non-smooth solutions, Regular and Chaotic Dynamics 12~(1)
  (2007) 1--11.

\bibitem{danx}
M.-F. Danca, Synchronization of piece-wise continuous systems of fractional
  order, Submitted.

\bibitem{Matignon1998}
D.~Matignon, Stability properties for generalized fractional differential
  systems, ESAIM: Proc. 5 (1998) 145--158.

\bibitem{DengLiLu2007}
W.~Deng, C.~Li, J.~L\"u, Stability analysis of linear fractional differential
  system with multiple time delays, Nonlinear Dynamics 48~(4) (2007) 409--416.

\bibitem{TavazoeiHaeriJafari2008}
M.~S. Tavazoei, M.~Haeri, S.~Jafari, Fractional controller to stabilize fixed
  points of uncertain chaotic systems: {T}heoretical and experimental study,
  Proceedings of the Institution of Mechanical Engineers, Part I: Journal of
  Systems and Control Engineering 222~(3) (2008) 175--184.

\bibitem{Garrappa_FDE12}
R.~Garrappa, Predictor-corrector {PECE} method for fractional differential
  equations, MATLAB Central File Exchange, file ID: 32918 (2012).

\bibitem{DiethelmFordFreed2002}
K.~Diethelm, N.~J. Ford, A.~D. Freed, A predictor-corrector approach for the
  numerical solution of fractional differential equations, Nonlinear Dynam.
  29~(1-4) (2002) 3--22.

\bibitem{FordSimpson2001}
N.~J. Ford, A.~C. Simpson, The numerical solution of fractional differential
  equations: speed versus accuracy, Numer. Algorithms 26~(4) (2001) 333--346.

\bibitem{Young1954}
A.~Young, Approximate product-integration, Proc. Roy. Soc. London Ser. A. 224
  (1954) 552--561.

\bibitem{DeBin2002}
H.~De-Bin, {Failure of the Ott-Grebogi-York-Type Controllers for Nonhyperbolic
  Chaos}, Chinese Physics Letters 19~(6) (2002) 762.

\end{thebibliography}

\newpage

\begin{figure}
\begin{center}
  \includegraphics[clip,width=0.5\textwidth] {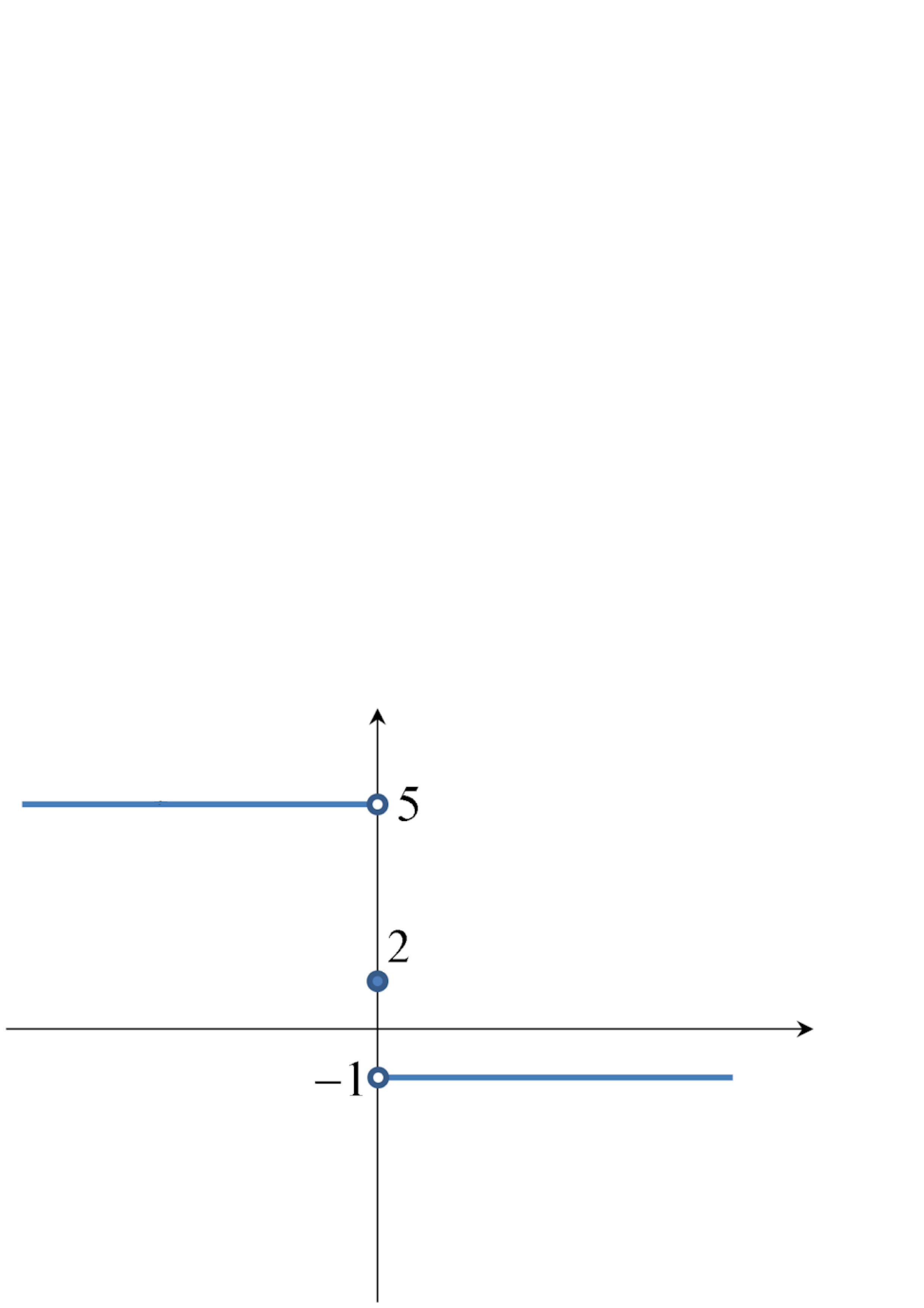}
\caption{Graph of PWC function $f(x)=2-3sgn(x)$.}
\label{fig1}
\end{center}
\end{figure}

\begin{figure}
\begin{center}
	\begin{tabular}{cc}
		\includegraphics[width=0.45\textwidth]{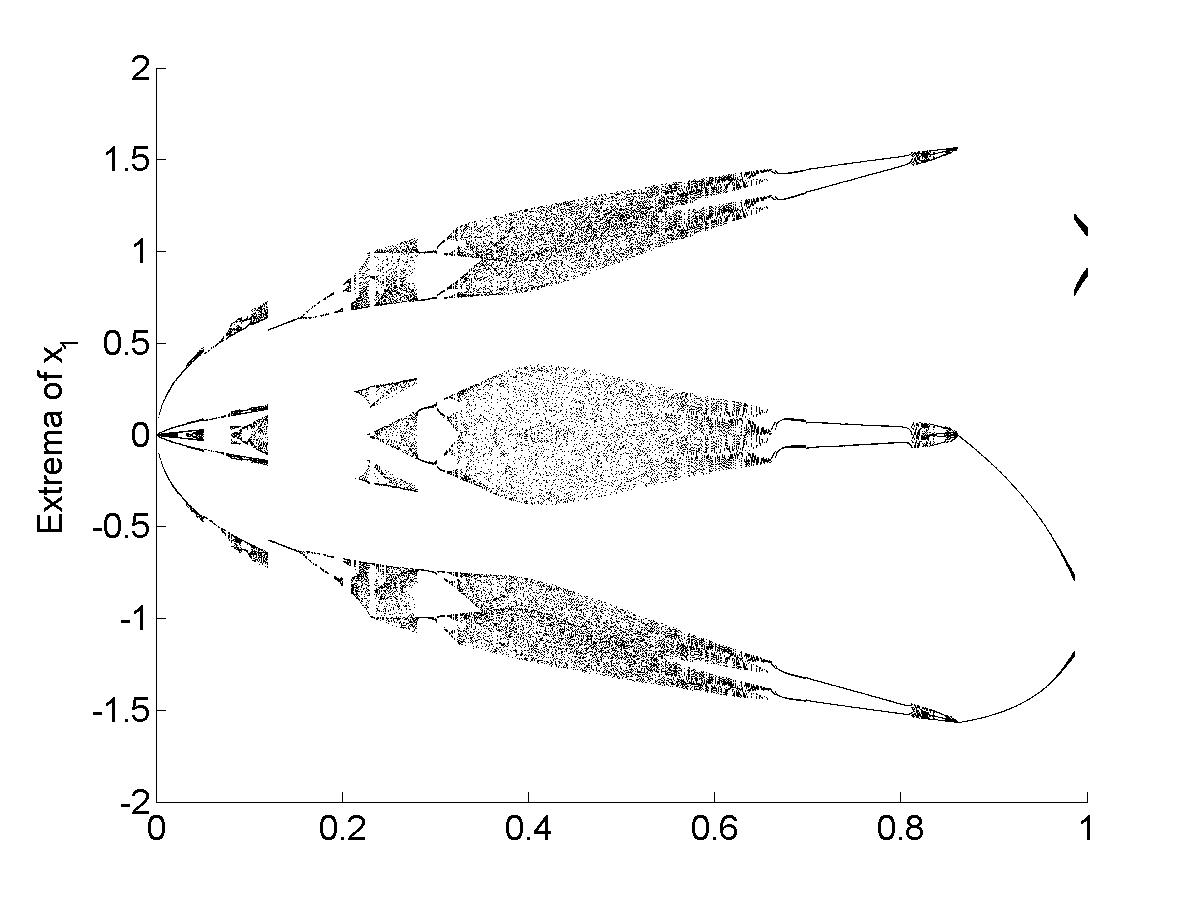} &
		\includegraphics[width=0.45\textwidth]{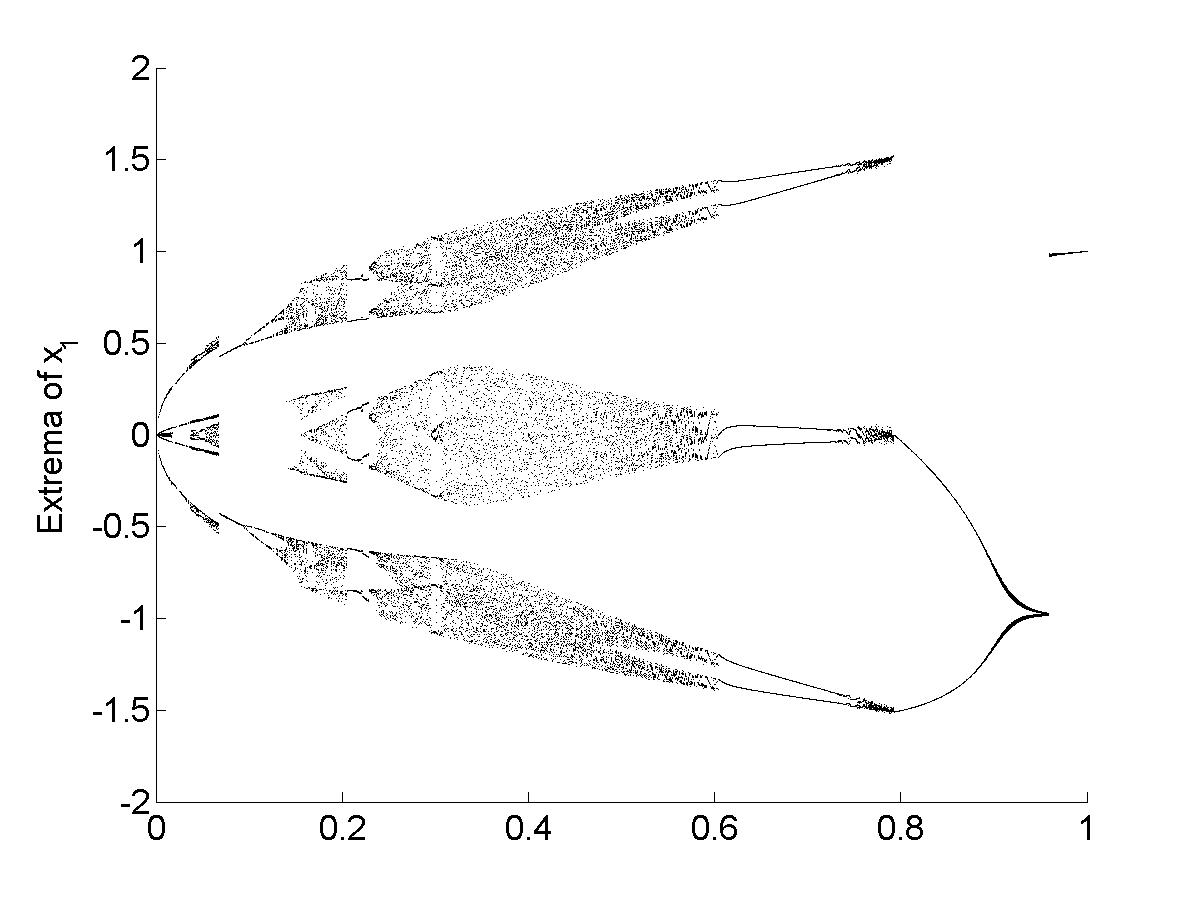} \\	\\	
		\includegraphics[width=0.45\textwidth]{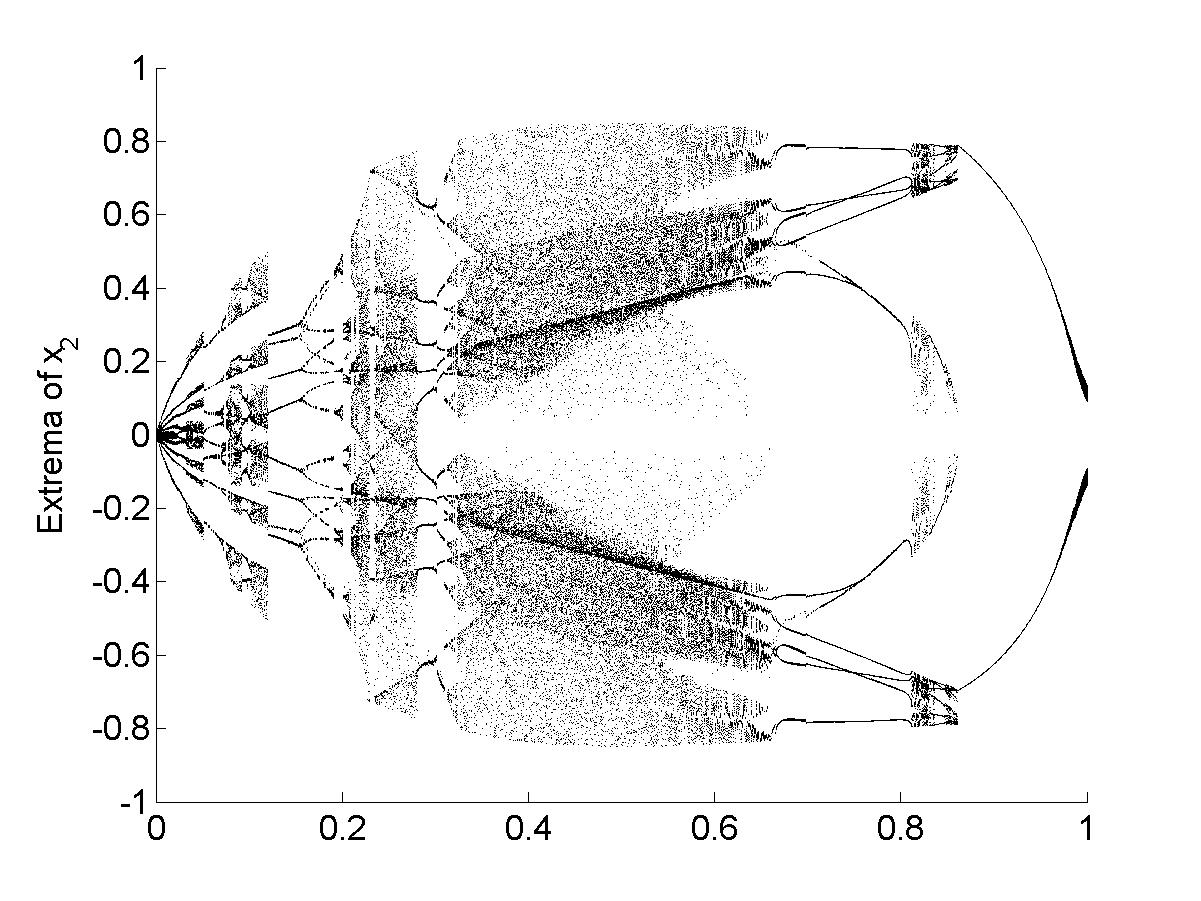} &
		\includegraphics[width=0.45\textwidth]{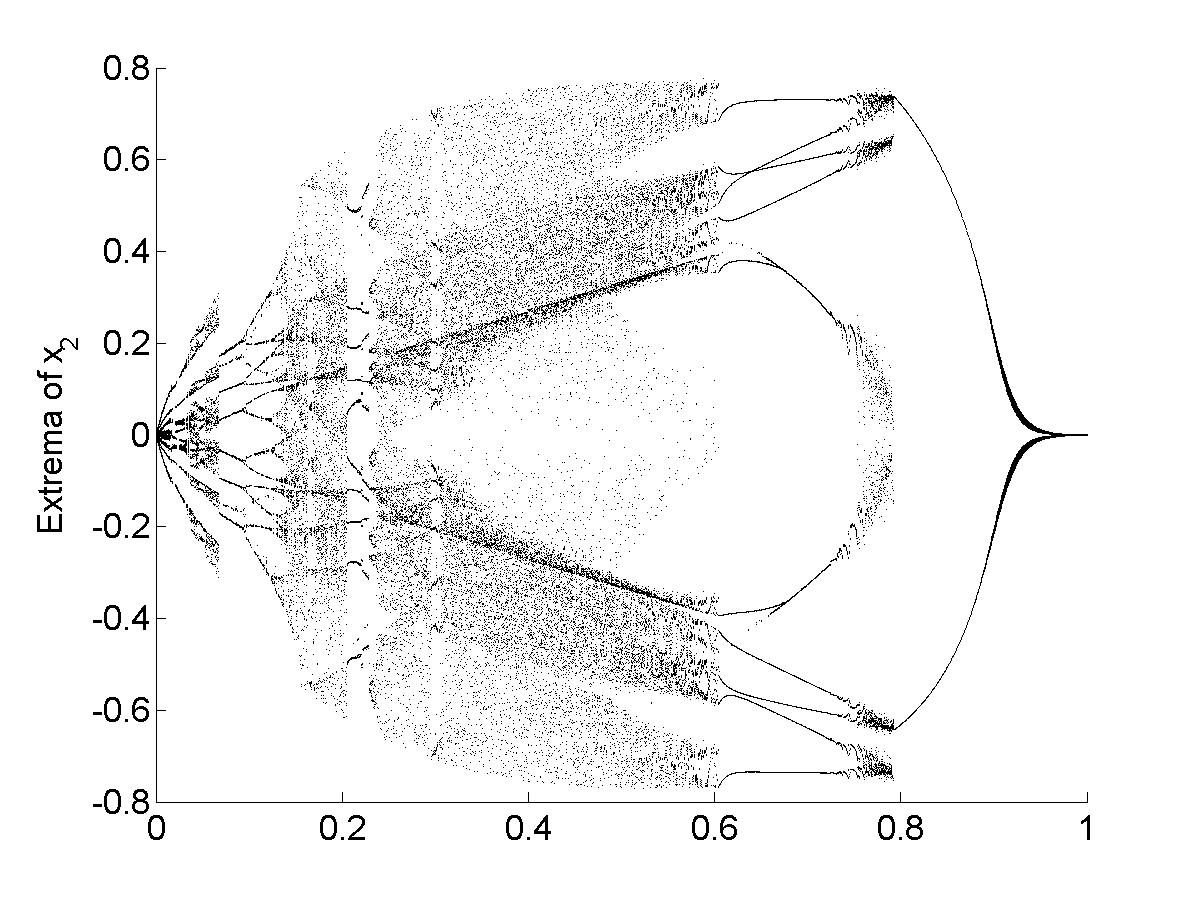} \\	\\	
		\includegraphics[width=0.45\textwidth]{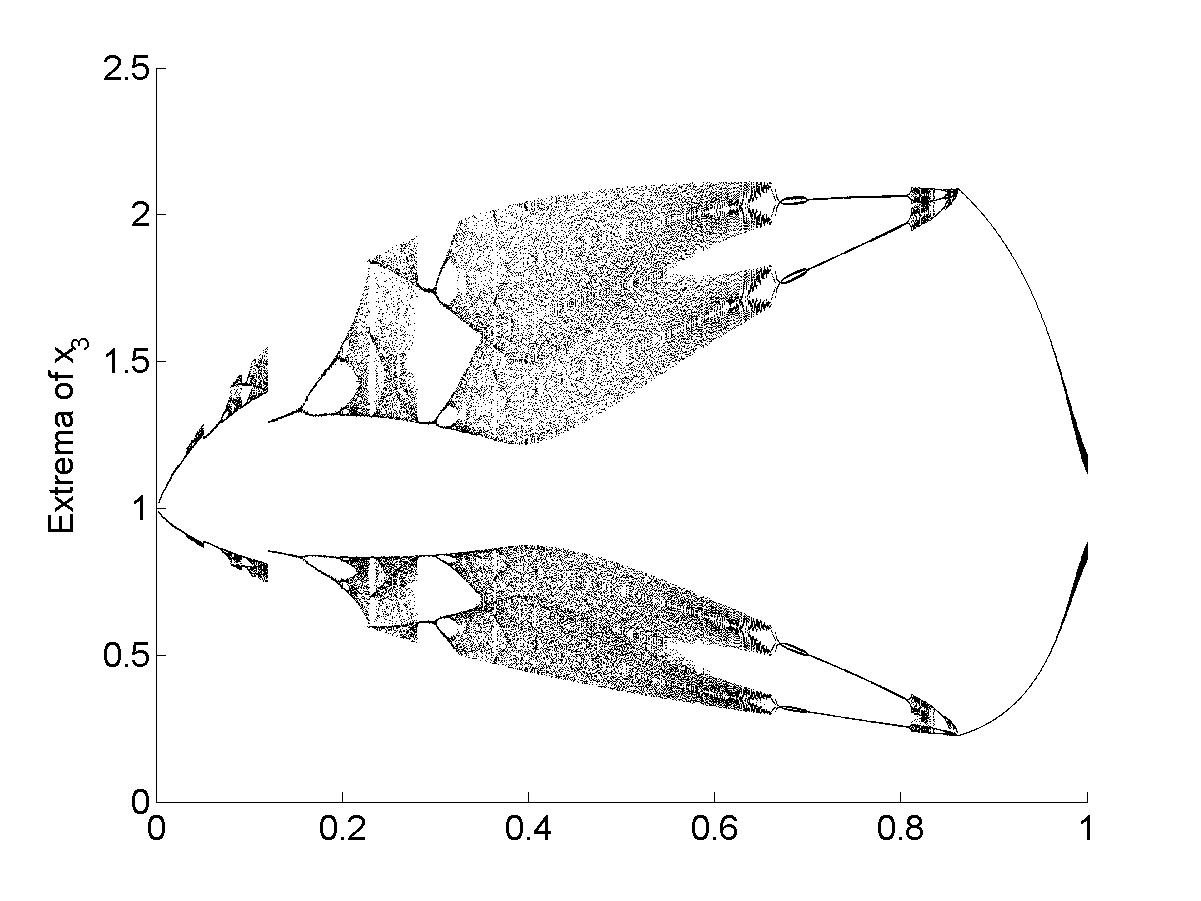} &
		\includegraphics[width=0.45\textwidth]{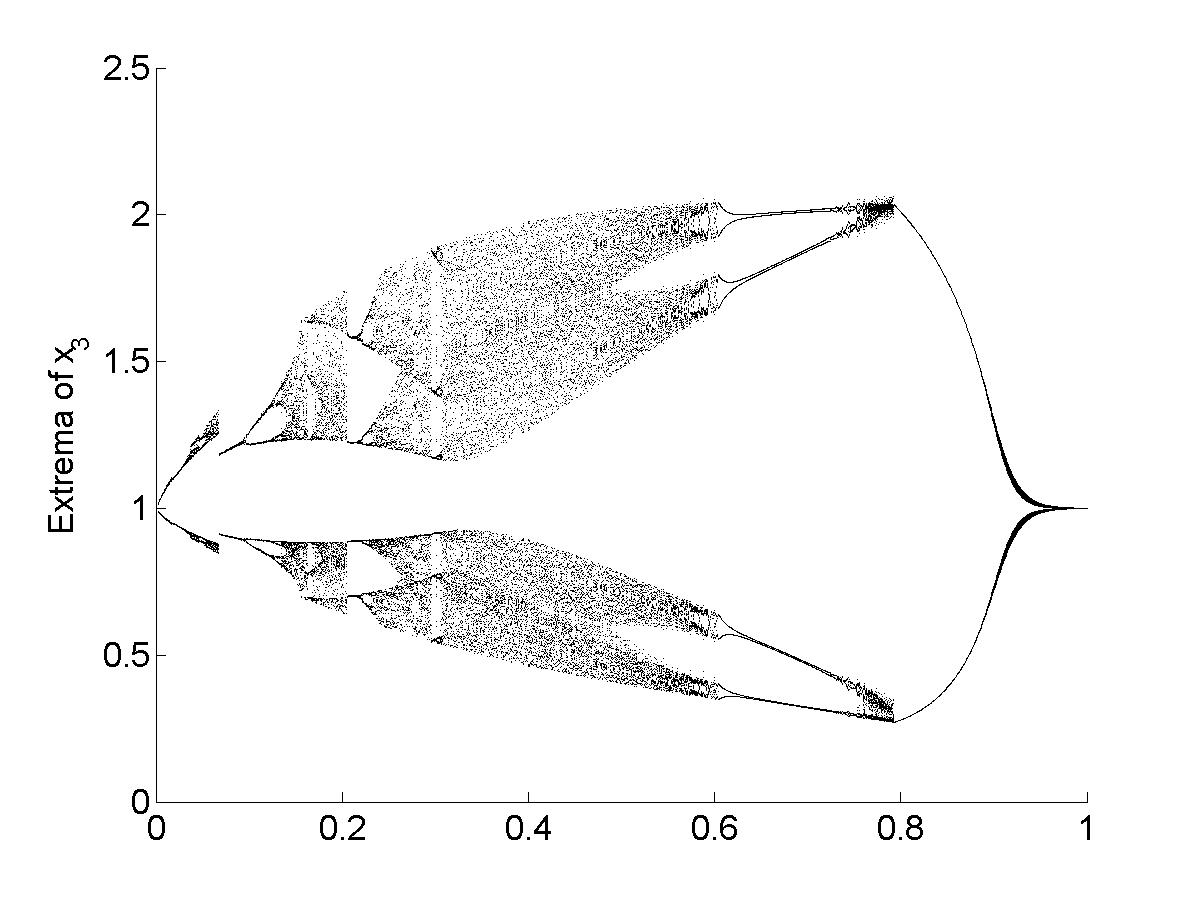} \\		
	\end{tabular}
\caption{Bifurcation diagrams (as the parameter $b$ varies in $[0,1]$) of extrema of the state variables for the Shimizu--Morioka's system. Left column: commensurate case $q_1=q_2=q_3=0.95$. Right column: incommensurate case $q_1=1$, $q_2=q_3=0.9$.}
\label{bd}
\end{center}
\end{figure}

\begin{figure}
\begin{center}
  \includegraphics[clip,width=0.5\textwidth] {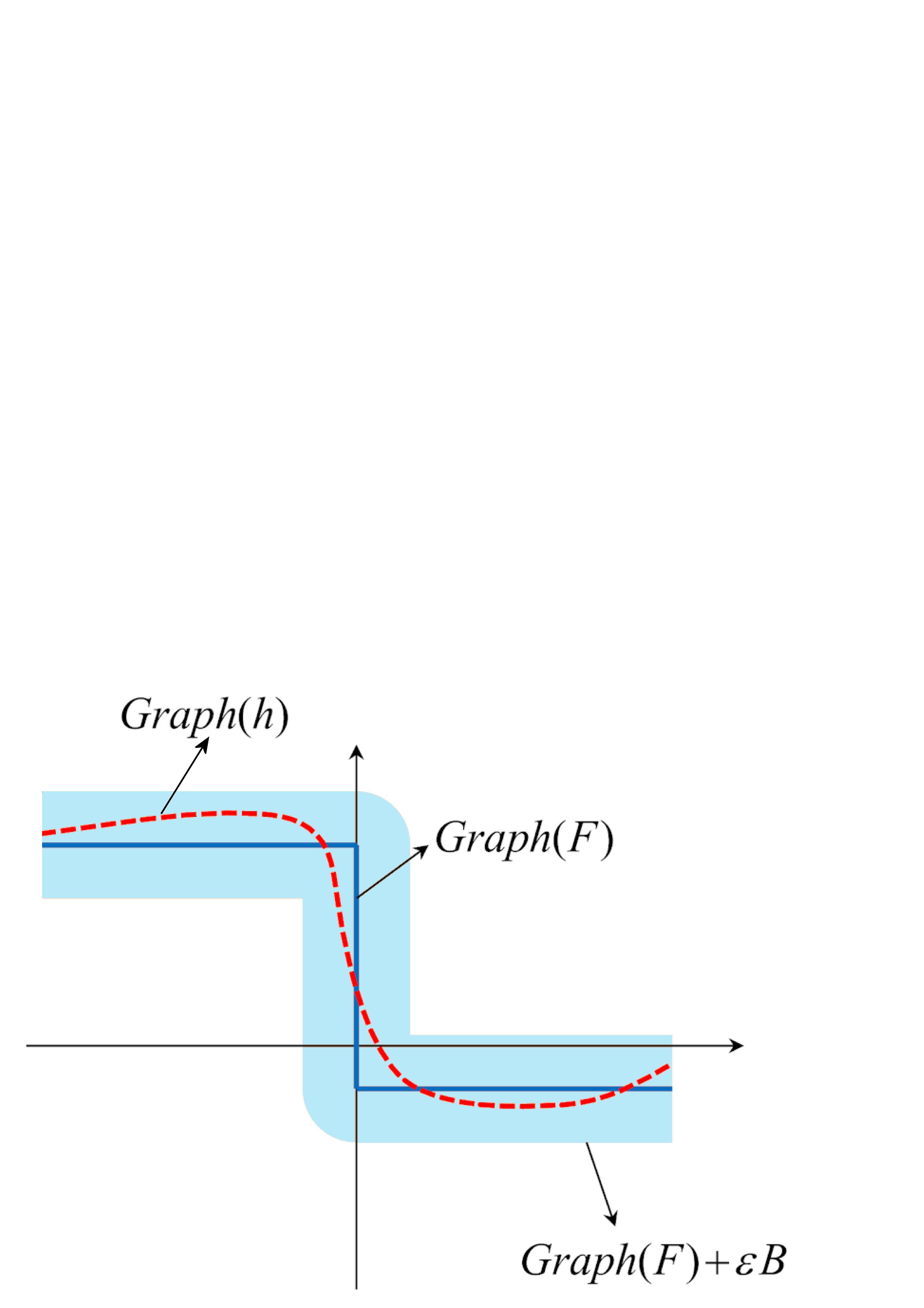}
\caption{Graph of a set-valued function (continuous line), his $\varepsilon$-neighborhood and a continuous approximation (dotted line).}
\label{fig2}
\end{center}
\end{figure}

\begin{figure}
\begin{center}
  \includegraphics[clip,width=0.9\textwidth] {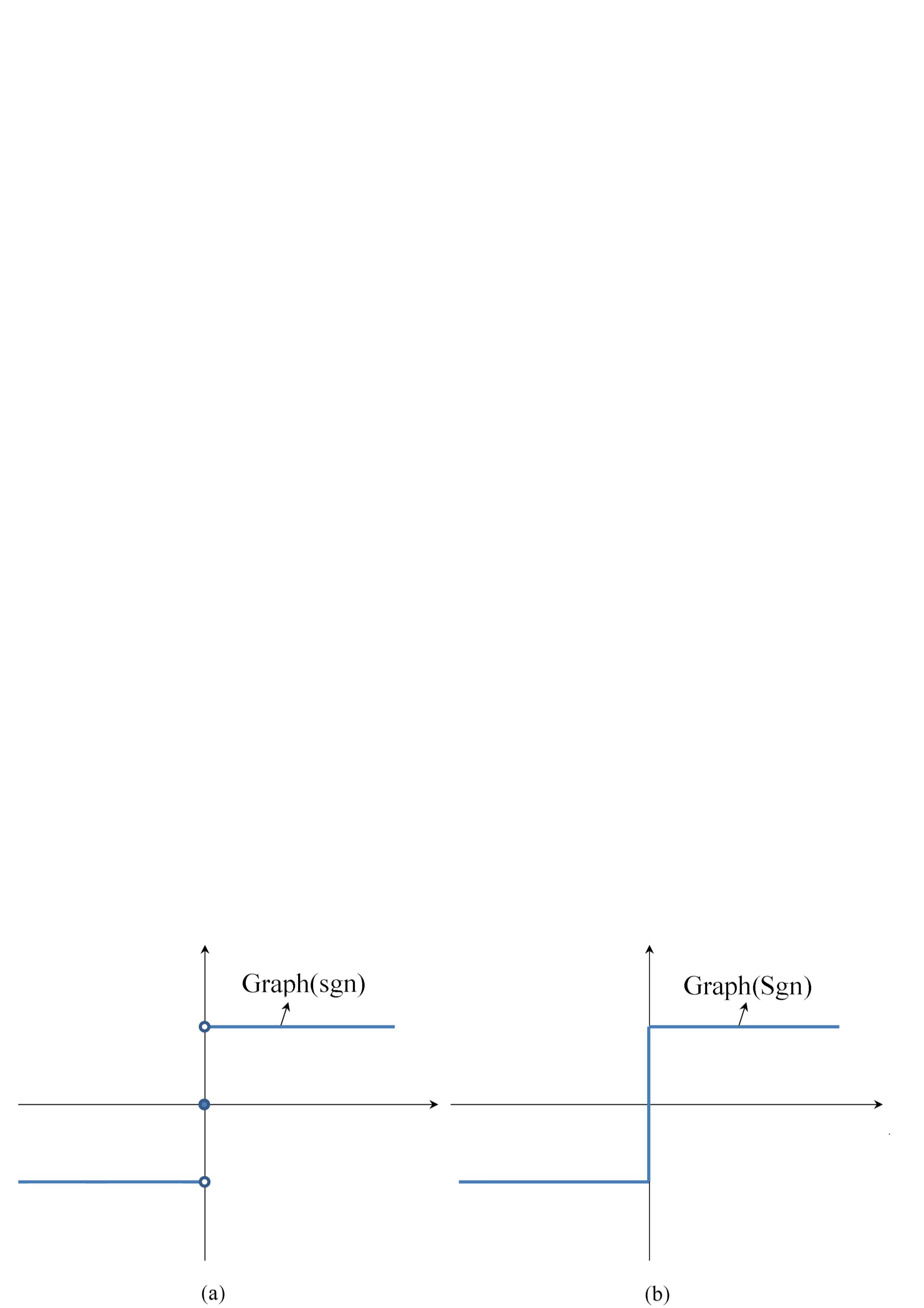}
\caption{a) Graph of $sgn$ function. b) Graph of the set-valued function $Sgn$.}
\label{fig3}
\end{center}
\end{figure}

\begin{figure}
\begin{center}
  \includegraphics[clip,width=0.6\textwidth] {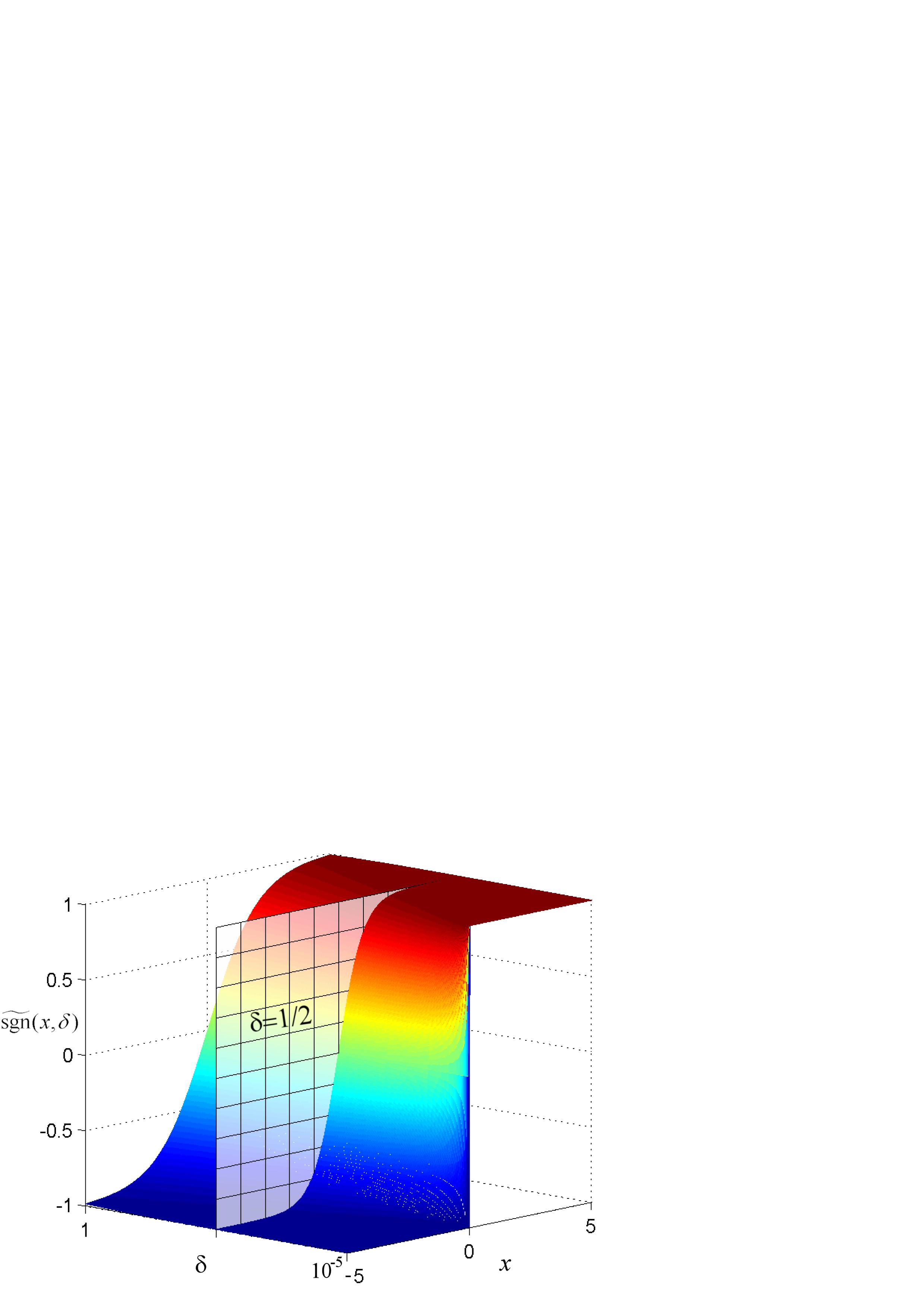}
\caption{Surface representing the family of functions $\widetilde{sgn}$ (\ref{h_simplu}), depending on $\delta$. The transversal plane reveals one $\widetilde{sgn}$ function, corresponding to $\delta=1/2$. }
\label{fig4}
\end{center}
\end{figure}

\begin{figure}
\begin{center}
  \includegraphics[clip,width=0.7\textwidth] {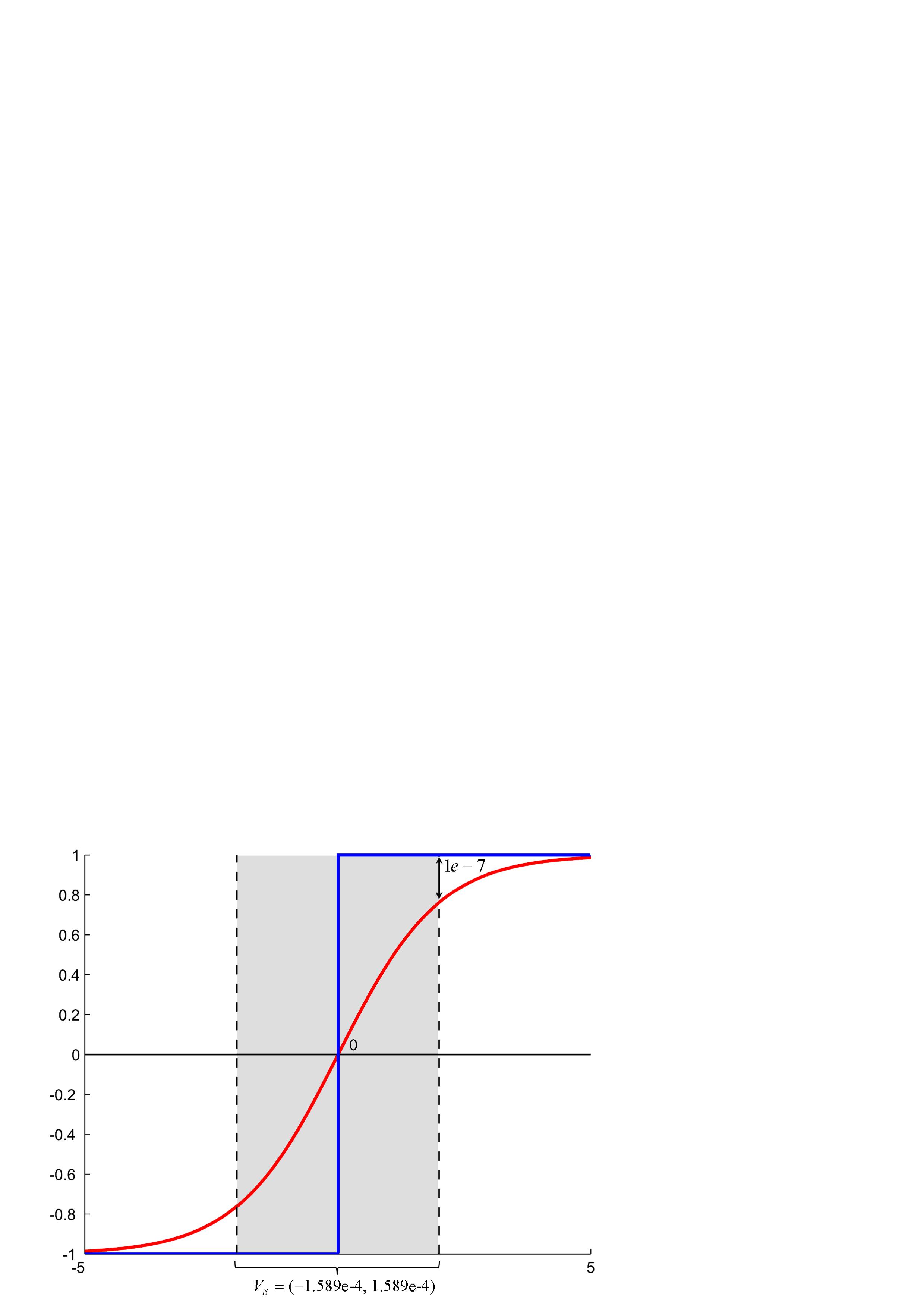}
\caption{Distance between $\widetilde{sgn}$ and $Sgn$ at $x=1.589e-4$. For clarity, $\mathcal{V}_{\delta}$ is drawn larger (see Remark \ref{cater}).}
\label{fig5}
\end{center}
\end{figure}

\begin{figure}
\begin{center}
  \includegraphics[clip,width=1\textwidth] {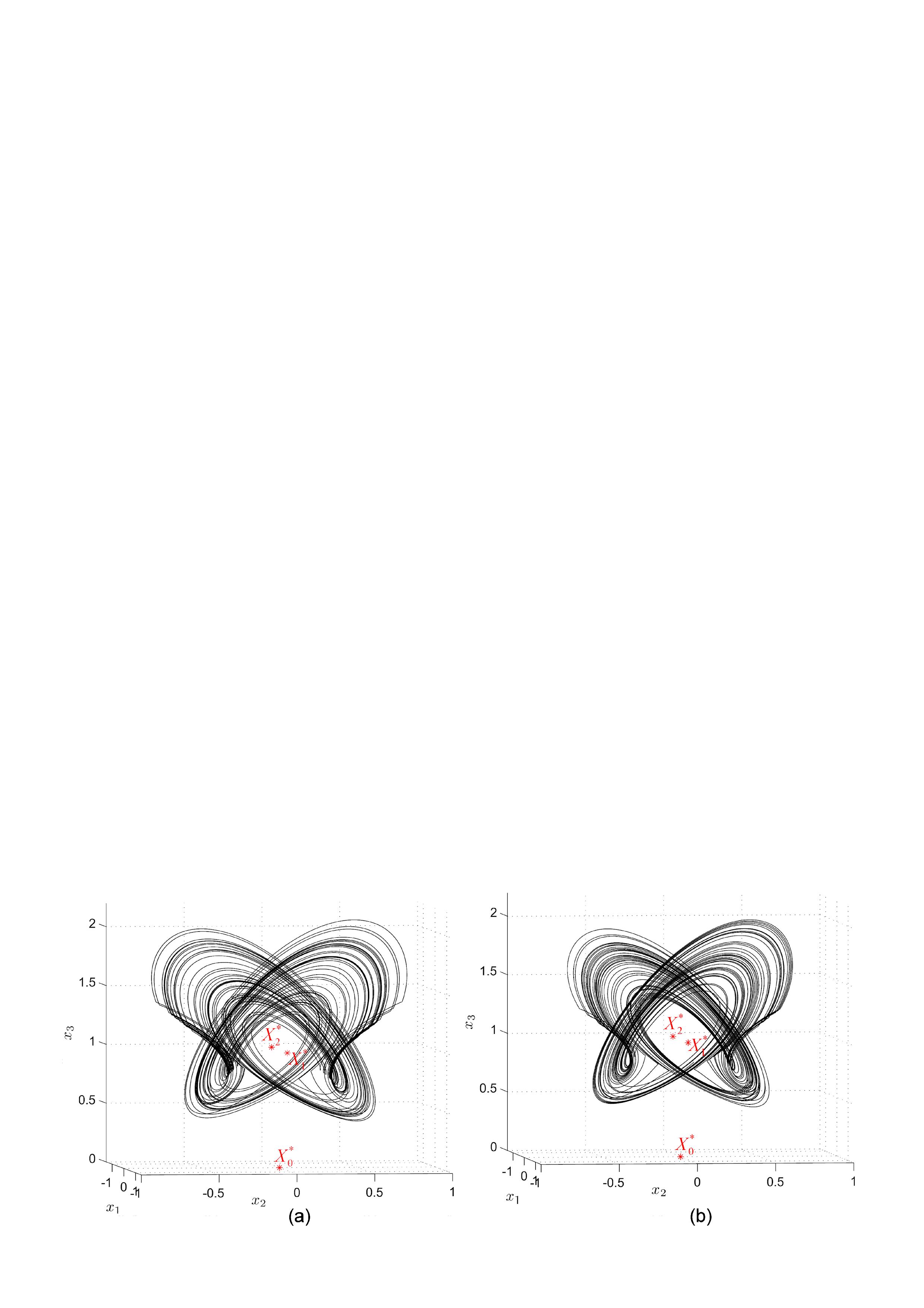}
\caption{Chaotic attractors of Shimizu--Morioka's system. a) Commensurate case $q_1=q_2=q_3=0.95$. b) Incommensurate case $q_1=1$, $q_2=q_3=0.9$.}
\label{fig6}
\end{center}
\end{figure}

\begin{figure}
\begin{center}
  \includegraphics[clip,width=0.5\textwidth] {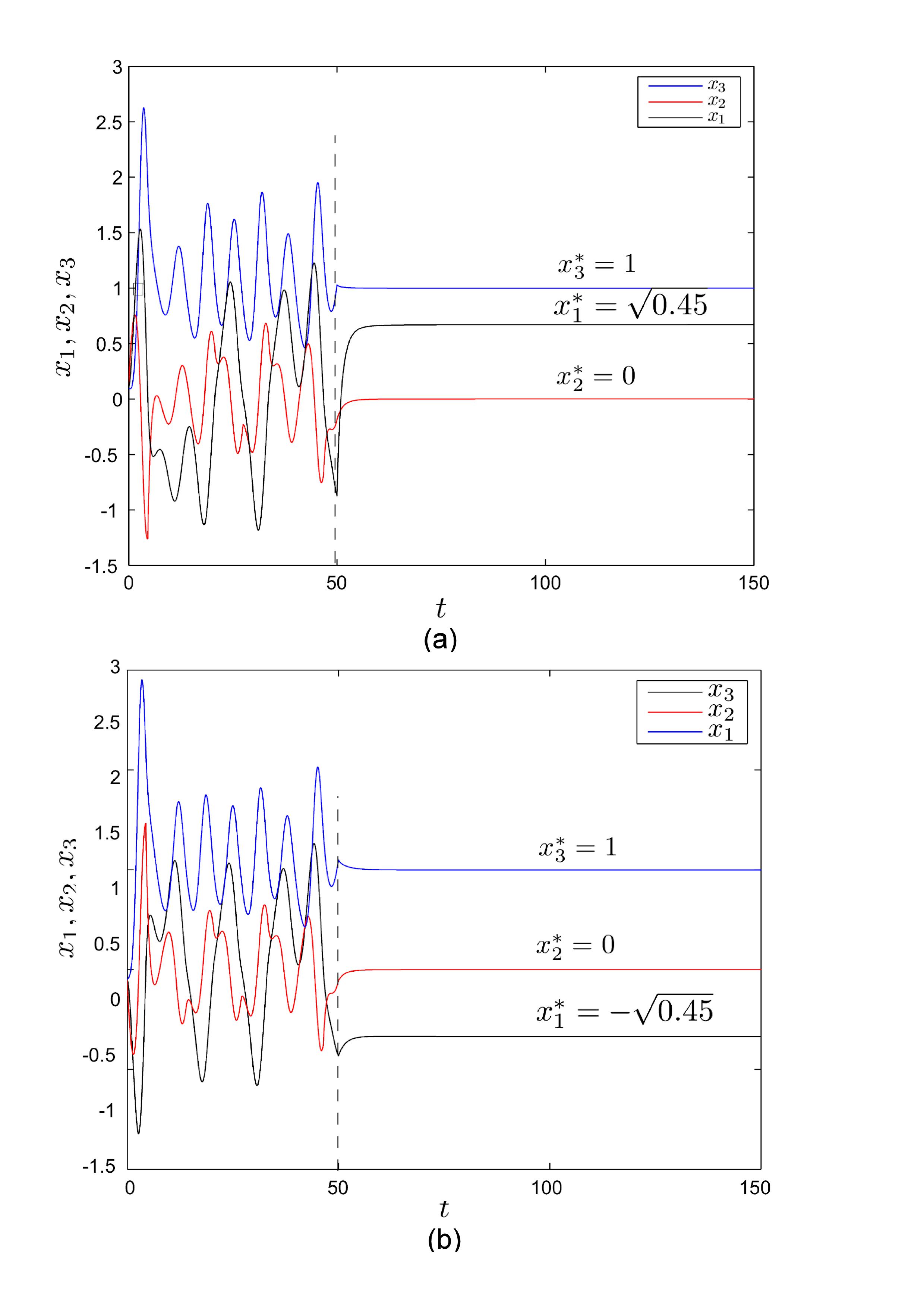}
\caption{Time series revealing the stabilization of the equilibrium points $X_{1,2}^*(\pm\sqrt{0.45},0,1)$ in the commensurate case $q=(0.95,0.95,0.95)$.  The control is activated at $t=50$. a) Equilibrium point $X_1^*(\sqrt{0.45},0,1)$. b) Equilibrium point $X_2^*(-\sqrt{0.45},0,1)$.}
\label{fig7}
\end{center}
\end{figure}

\begin{figure}
\begin{center}
  \includegraphics[clip,width=1\textwidth] {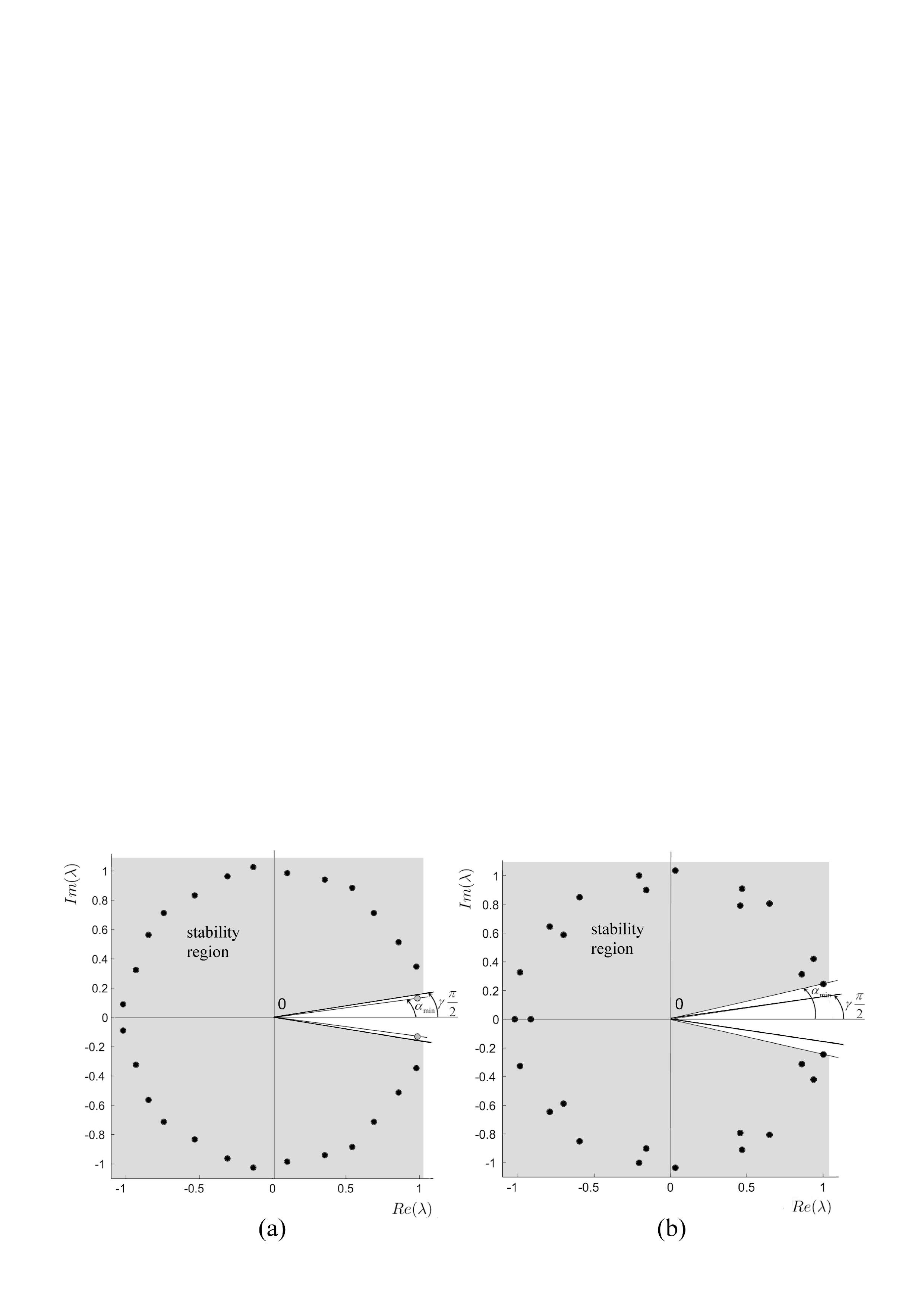}
\caption{Roots of the characteristic equation of $X_1^*$, for the incommensurate case $q=(1,0.9,0.9)$. a) Before control. b) After control.}
\label{fig9}
\end{center}
\end{figure}

\begin{figure}
\begin{center}
  \includegraphics[clip,width=0.5\textwidth] {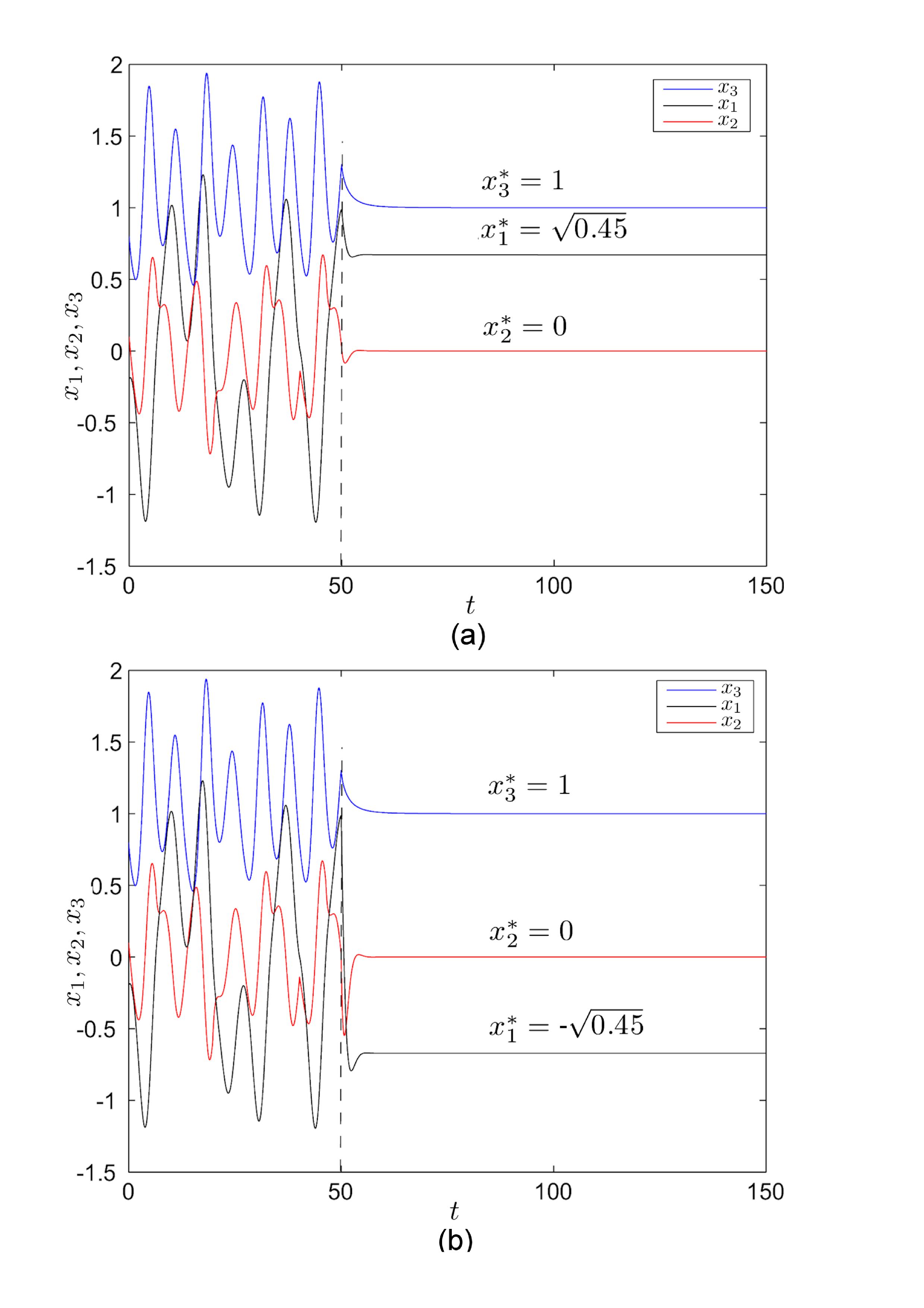}
\caption{Time series revealing the stabilization of the equilibrium points $X_{1,2}^*(\pm\sqrt{0.45},0,1)$ in the incommensurate case $q=(1,0.9,0.9)$.  The control is activated at $t=50$. a) Equilibrium point $X_1^*(\sqrt{0.45},0,1)$. b) Equilibrium point $X_2^*(-\sqrt{0.45},0,1)$.}
\label{fig10}
\end{center}
\end{figure}

\begin{figure}
\begin{center}
  \includegraphics[clip,width=0.5\textwidth] {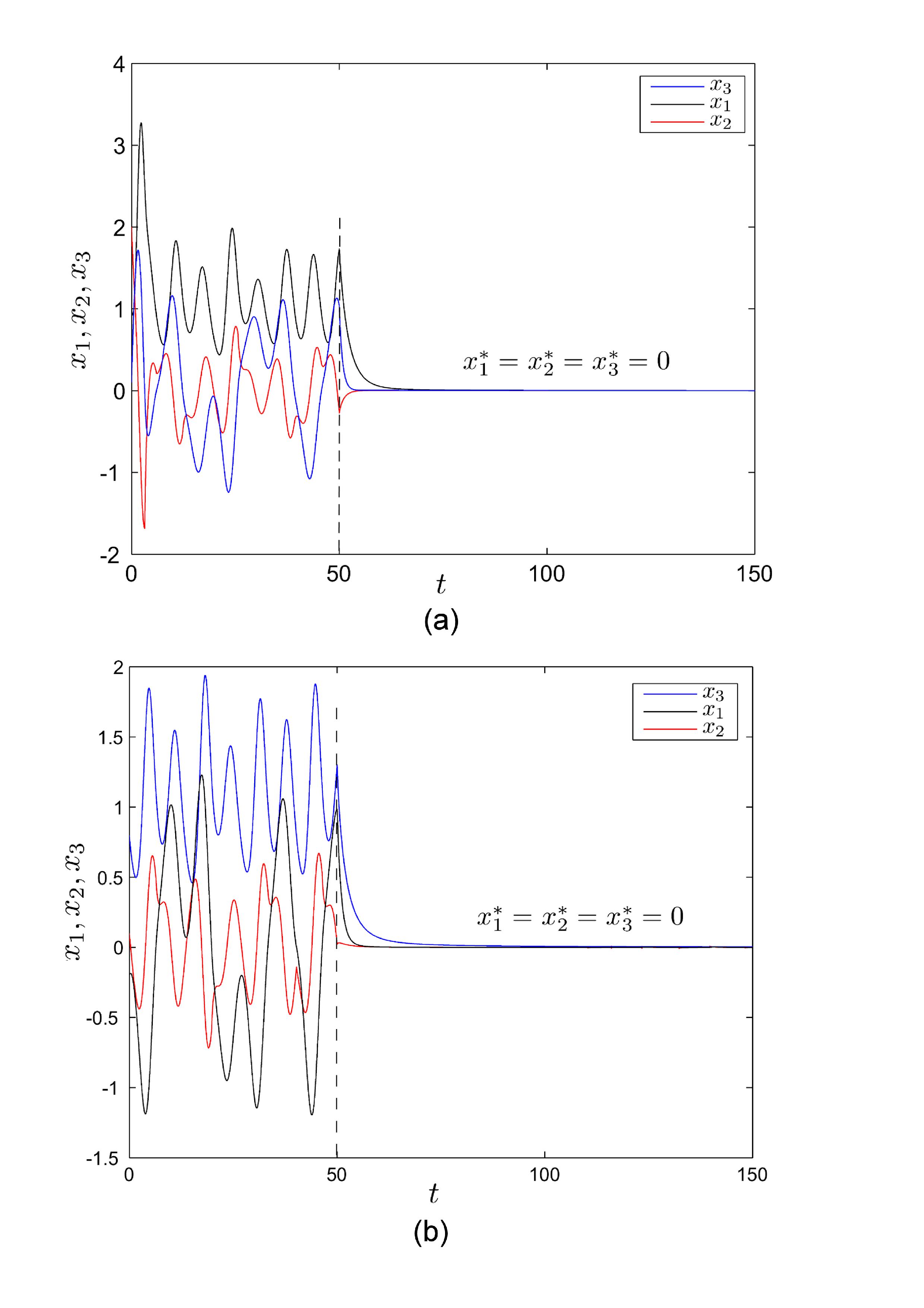}
\caption{Time series revealing the stabilization of the equilibrium points $X_{3}^*(0,0,0)$. The control is activated at $t=50$. a) Commensurate case $q=(0.95,0.95,0.95)$. b) Incommensurate case $q=(1,0.9,0.9)$.}
\label{fig11}
\end{center}
\end{figure}

\end{document}